\newtheorem{thm}{Theorem}
\newtheorem*{thm*}{Theorem}
\newtheorem{algorithm}{Algorithm}
\newtheorem{cor}{Corollary}
\newtheorem{lem}{Lemma}
\newtheorem{prop}{Proposition}
\newtheorem*{prop*}{Proposition}
\theoremstyle{definition}
\newtheorem{defn}{Definition}
\theoremstyle{remark}
\newtheorem{rem}{Remark}
\numberwithin{equation}{section}
\newtheorem{ass}{Assumption}
\def\norma #1{\left\lVert #1 \right\rVert}
\def\N{{\mathbb N}}
\def\R{{\mathbb R}}
\def\CC{{\mathbb C}}
\def\L{{\mathcal L}}
\def\E{{\mathbb E}}
\def\X{{\mathcal X}}
\def\norma #1{\left\lVert #1 \right\rVert}
\def\de{{\rm d}}
\def\K{{\mathcal K}}
\def\lin{{\rm lin}}
\newcommand{\modifica}[1]{#1}               
\title{Efficient classical computation of the neural tangent kernel of quantum neural networks}
\author[A.~Melchor Hernandez]{Anderson Melchor Hernandez}
\affiliation{Dipartimento di Matematica,
Università di Bologna, Piazza di Porta San Donato 5, 40126, Bologna (Italy)}
\email{anderson.melchor@unibo.it}
\author[D.~Pastorello]{Davide Pastorello}
\affiliation{Dipartimento di Matematica,
Università di Bologna, Piazza di Porta San Donato 5, 40126, Bologna (Italy)\\TIFPA-INFN, via Sommarive 14, 38123 Povo (Trento), Italy}
\email{davide.pastorello3@unibo.it}
\author[G.~De Palma]{Giacomo De Palma}
\affiliation{Dipartimento di Matematica,
Università di Bologna, Piazza di Porta San Donato 5, 40126, Bologna (Italy)}
\email{giacomo.depalma@unibo.it}
\keywords{quantum neural networks, Bernstein's inequality, neural tangent kernel, Clifford group}
\begin{document}
\begin{abstract}
We propose an efficient classical algorithm to estimate the Neural Tangent Kernel (NTK) associated with a broad class of quantum neural networks. These networks consist of arbitrary unitary operators belonging to the Clifford group interleaved with parametric gates given by the time evolution generated by an arbitrary Hamiltonian belonging to the Pauli group.
The proposed algorithm leverages a key insight: the average over the distribution of initialization parameters in the NTK definition can be exactly replaced by an average over just four discrete values, chosen such that the corresponding parametric gates are Clifford operations. This reduction enables an efficient classical simulation of the circuit.
Combined with recent results establishing the equivalence between wide quantum neural networks and Gaussian processes [Girardi \emph{et al.}, Comm. Math. Phys. 406, 92 (2025); Melchor Hernandez \emph{et al.}, \modifica{Ann. Henri Poincar{\'e} (2025)}], our method enables efficient computation of the expected output of wide, trained quantum neural networks, and therefore shows that such networks cannot achieve quantum advantage.
\end{abstract}

\maketitle

\tableofcontents

\section{Introduction}

Quantum Machine Learning (QML) is an interdisciplinary field that merges the principles of quantum computing with classical machine learning techniques \cite{de2019primer,schuld2015, pastorello2023concise}. One of the core ideas of QML is to harness quantum algorithms and the unique properties of quantum mechanics such as superposition, entanglement, and quantum parallelism to enhance the performance of deep neural models \cite{biamonte2017}. Quantum neural networks constitute the quantum version of deep neural models. These new models are based on quantum circuits and generate functions given by the expectation values of a quantum observable measured on the output of a quantum circuit made by parametric gates \cite{girardi2024,schuld2018}. These parameters are typically optimized by gradient descent, which involves iterative adjustment to minimize a cost function and improve the performance of the quantum circuit in the processing and analysis of data \cite{schuld2021effect}.
Several works have focused on the analysis of quantum neural networks, as it is believed that they can combine the computational power of quantum computers with the capabilities of deep learning algorithms \cite{lloyd2020quantum}. Ref. \cite{liu2021} shows that an exponential quantum speed-up can be obtained via the use of a quantum-enhanced feature space, where each data point is mapped in a non-linear way to a quantum state, and then classified by a linear classifier in a high-dimensional Hilbert space \cite{havlivcek2019}. Nevertheless, a significant disadvantage lies in the need to determine the appropriate parameters to configure the quantum circuit beforehand, and it is not yet clear whether these parameters can be effectively obtained using a variational technique \cite{cinelli2021var}.

A rigorous mathematical characterization of the training dynamics of quantum neural networks becomes possible in the limit of infinite width. Ref. \cite{abedi2023} proves that in the limit of infinite width with the depth kept fixed, trained quantum neural networks with constant depth operate in the lazy regime (\emph{i.e.}, the maximum amount by which the training can change a parameter tends to zero) and are capable of perfectly fitting the training examples. However, such a regime is not interesting for quantum advantage, since at finite depth, the past light-cone of each measured qubit has a finite size and the expectation value of the measured observable can be efficiently estimated on a classical computer.
The recent works \cite{girardi2024,melchor2025quantitative} consider the joint limit of infinite width and depth.
Under the hypothesis that the depth grows at most logarithmically with respect to the number of qubits, they prove that the probability distribution of the trained model function converges in distribution to a Gaussian process. The key element of the proof is showing that the training happens in the lazy regime, and therefore the dependence of the model function on the parameters can be approximated by its linearized version near the initialization values. Consequently, in the limit the model becomes linear and the training has an analytic solution whose probability distribution is Gaussian with analytically computable mean and covariance.
Such a linear model is governed by a kernel called the Neural Tangent Kernel (NTK), which has been investigated in several recent studies \cite{schuld2015, Shirai:2021jpy, PRXQuantum.3.030323, Yu_2024, scala2025towards}.

Let us recall that, given a general parametric model $f_\theta:\mathcal X\rightarrow\mathbb{R}$, with $\theta\in\mathbb{R}^L$, where $\mathcal X$ is the feature space, the {\em empirical} NTK is defined as \cite{NEURIPS2018_5a4be1fa}:
\begin{equation}\label{eq:defNTK}
    \hat K_\theta(x,x')= \nabla_\theta f_\theta(x)^T\nabla_\theta f_\theta(x')\qquad x,x'\in\mathcal X,
\end{equation}
and is the central object to describe the training dynamics of the model generated by the time evolution of the parameters under gradient flow.
The {\em analytic} NTK is defined by the expectation value of the empirical NTK with respect to the random sampling of the parameters at initialization:
\begin{equation}
    K(x,x')=\mathbb E_\theta\,\hat K_\theta(x,x')\,,
\end{equation}
and governs the training dynamics of the model in the limit of infinite width.

\subsection{Our contribution}
In this work, we provide a classical algorithm to efficiently estimate the NTK for a very wide class of quantum neural networks and present quantitative results on both the sample and computational complexity of the proposed algorithm to show the efficiency of the estimation.
We then leverage the proposed algorithm to efficiently estimate the average of the model functions of very wide trained quantum neural networks.
We consider a finite set $\mathcal X$ of possible inputs and a model function defined by
\begin{equation}\label{model}
    f_\theta(x)=\langle 0^n|U_{x,\theta}^\dagger\, O\, U_{x,\theta}|0^n\rangle,
\end{equation}
where $n$ denotes the number of qubits, $|0^n\rangle = |0\rangle^{\otimes n}$, $O$ is the measured $n$-qubit observable, and $U_{x,\theta}$ is a parametric quantum circuit made by nonparametric unitary operations belonging to the Clifford group, which can depend on the input $x$ in an arbitrary way, interleaved with parametric gates given by the time evolution generated by a Hamiltonian belonging to the Pauli group (see \eqref{circuit}), so that each entry of the parameter vector is a rotation angle in $[0,2\pi)$.

We provide an efficient classical algorithm to estimate the NTK of any such a model function.
Our estimator is unbiased, meaning that its expected value coincides with the true kernel. Our first contribution can be stated informally in the following way:

\begin{thm}[Informal statement of \autoref{thm:result0}]\label{infomral_thm1}
Let $\mathcal{X}$ be the feature space and consider the model function defined in \eqref{model}. Let each parameter be initialized independently from the uniform distribution on $[0,2\pi)$, and let us assume that $\mathbb{E}_\theta\,f_\theta(x) = 0$ for any $x\in\mathcal{X}$. Then, for any $x,\,x'\in \mathcal{X}$ and any $\epsilon,\,\delta>0$ there is a classical probabilistic algorithm that estimates the number $K(x,x')$
with precision $\epsilon$ and probability at least $1-\delta$ with
\begin{equation}
    N=\frac{8L^2m^{2}}{3\epsilon^{2}}\log\frac{2}{\delta}
\end{equation}
iid samples of $\theta$, where $L$ denotes the number of parameters of the model, and where the measured observable $O$ can be expressed as a linear combination of $m$ tensor products of Pauli matrices. The algorithm requires $O(N\,L^2\,m\,n^2)$ classical operations.
\end{thm}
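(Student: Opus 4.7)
The plan is to replace the continuous expectation defining $K(x,x')$ by an exact discrete expectation over parameter values at which the whole circuit is Clifford, and then simulate each sample classically via Gottesman--Knill. The argument has four ingredients.

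First, I apply the parameter-shift rule. Since every parametric gate has the form $U_\ell(\theta_\ell) = e^{-i \theta_\ell P_\ell / 2}$ with $P_\ell$ a Pauli operator,
\begin{equation}
\partial_{\theta_\ell} f_\theta(x) = \tfrac{1}{2}\bigl[f_{\theta + \tfrac{\pi}{2} e_\ell}(x) - f_{\theta - \tfrac{\pi}{2} e_\ell}(x)\bigr],
\end{equation}
so the empirical NTK becomes a sum of $L$ products of differences of values of $f$ itself at shifted parameters, with no explicit gradients left in the expression.

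Second, I exploit the fact that $U_\ell(\theta_\ell) = \cos(\theta_\ell/2)\,I - i\sin(\theta_\ell/2)\,P_\ell$ appears once in $U_{x,\theta}$ and once in $U_{x,\theta}^\dagger$, so $f_\theta(x)$ is a trigonometric polynomial of degree at most one in each $\theta_\ell$, and every product entering $\hat K_\theta$ is a trigonometric polynomial of degree at most two in every $\theta_k$. For such a polynomial, the uniform average on $[0,2\pi)$ coincides with the discrete rectangle rule on any $N\ge 3$ equally spaced nodes, since $\tfrac{1}{N}\sum_{j=0}^{N-1}e^{i k\cdot 2\pi j / N}=\delta_{k,0}$ whenever $|k|\le N-1$. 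Taking $N=4$ with nodes $\{0,\pi/2,\pi,3\pi/2\}$ yields the key identity
\begin{equation}
K(x,x') = \mathbb{E}_{\theta\sim D^{\otimes L}}\,\hat K_\theta(x,x'), \qquad D=\mathrm{Unif}\bigl\{0,\tfrac{\pi}{2},\pi,\tfrac{3\pi}{2}\bigr\},
\end{equation}
and these four values are precisely those for which $U_\ell(\theta_\ell)$ is Clifford, while the shifts $\theta_\ell\pm\pi/2$ stay inside the same set modulo $2\pi$.

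Third, every circuit evaluated when computing $\hat K_\theta$ on a $\theta$ drawn from $D^{\otimes L}$ is Clifford. Writing $O=\sum_{j=1}^m c_j Q_j$ with each $Q_j$ a Pauli string, a single call to $f_\theta(x)$ reduces to computing $\langle 0^n|U_{x,\theta}^\dagger Q_j U_{x,\theta}|0^n\rangle$ for $j=1,\dots,m$, and by the Heisenberg-picture Gottesman--Knill theorem each of these evolves a Pauli through the $O(L)$ Clifford layers of $U_{x,\theta}$ in $O(Ln^2)$ operations. Hence one call to $f_\theta$ costs $O(Lmn^2)$, one sample of $\hat K_\theta(x,x')$ (requiring $O(L)$ such calls) costs $O(L^2mn^2)$, and the empirical mean $\bar K=\tfrac{1}{N}\sum_i\hat K_{\theta^{(i)}}(x,x')$ over $N$ iid samples from $D^{\otimes L}$ costs $O(NL^2mn^2)$ classical operations. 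Bounding $|f_\theta(x)|\le\sum_j|c_j|\le m$ also gives $|\hat K_\theta(x,x')|\le Lm^2$, and applying Bernstein's inequality with this range together with a variance bound derived from the centering hypothesis $\mathbb{E}_\theta f_\theta(x)=0$ delivers the announced sample complexity.

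The main technical obstacle I anticipate is precisely this last variance estimate: a naive Hoeffding bound using only the range $Lm^2$ yields the weaker $N=O(L^2m^4\epsilon^{-2}\log(1/\delta))$, so to reach the stated $L^2m^2$ scaling one must genuinely use the centering hypothesis to control the mixed fourth moments $\mathbb{E}[\partial_\ell f_\theta(x)\partial_\ell f_\theta(x')\partial_{\ell'}f_\theta(x)\partial_{\ell'}f_\theta(x')]$ so that they sum over $\ell,\ell'$ to $O(L^2m^2)$ rather than $O(L^2m^4)$. The remaining ingredients --- parameter-shift, exact trigonometric quadrature on the four Clifford nodes, and Gottesman--Knill simulation --- are all standard or elementary; the conceptual heart of the argument is the observation that the smallest equally spaced grid on $[0,2\pi)$ that exactly integrates degree-two trigonometric polynomials coincides with the set of angles that turn Pauli rotations into Clifford gates.
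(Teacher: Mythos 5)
Your proposal follows essentially the same route as the paper: the parameter-shift rule to eliminate explicit gradients, the observation that $\hat K_\theta(x,x')$ is a trigonometric polynomial of degree at most two in each angle so that the continuous average over $[0,2\pi)$ equals the four-point average over $\left\{0,\frac{\pi}{2},\pi,\frac{3\pi}{2}\right\}$ (the paper packages this as the statement that $\mathbb{E}_{\theta\sim\mu}\left(e^{i\frac{\theta}{2}P}\otimes e^{-i\frac{\theta}{2}P}\right)^{\otimes 2}$ depends on $\mu$ only through $\mathbb{E}\,e^{i\theta}$ and $\mathbb{E}\,e^{2i\theta}$), Gottesman--Knill simulation of the resulting Clifford circuits for the $O(N L^2 m n^2)$ operation count, and a Bernstein bound for the sample complexity. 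All four ingredients match the paper's proof.

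The one obstacle you flag --- that the range bound $|\hat K_\theta(x,x')|\le L m^2$ only yields $N=O(L^2 m^4 \epsilon^{-2}\log(2/\delta))$, and that reaching the stated $L^2 m^2$ scaling would require a genuinely sharper variance estimate --- is well taken, but you should not expect to resolve it by consulting the paper: the paper's own derivation produces the tail bound $2\exp\left(-\frac{3N\epsilon^2}{8L^2\left\|f\right\|_\infty^4}\right)$ and then, when solving for $N$, writes $\left\|f\right\|_\infty^{2}$ in place of $\left\|f\right\|_\infty^{4}$ before substituting $\left\|f\right\|_\infty\le m$. The centering hypothesis $\mathbb{E}_\theta f_\theta(x)=0$ is never invoked in the paper's concentration argument, and no fourth-moment estimate of the kind you describe appears anywhere. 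So your ``naive'' $m^4$ is exactly what the paper's argument actually establishes, and the $m^2$ in the statement is unsupported as written. Carrying out your plan with the honest constant reproduces the paper's proof up to that discrepancy; obtaining $m^2$ would require new input (e.g.\ a bound on $\sum_{\ell,\ell'}\mathbb{E}\left[\partial_{\theta_\ell}f_\theta(x)\,\partial_{\theta_\ell}f_\theta(x')\,\partial_{\theta_{\ell'}}f_\theta(x)\,\partial_{\theta_{\ell'}}f_\theta(x')\right]$ of order $L^2m^2$), which neither you nor the paper supplies.
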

The key idea of the algorithm is that, since $f_\theta(x)$ is a quadratic polynomial in the entries of the unitary matrix associated to each gate and the analytic NTK is quadratic in $f$, the expectation value with respect to the uniform distribution of the parameters can be replaced by the expectation value with respect to the uniform distribution on the set $\left\{0, \frac{\pi}{2}, \pi, \frac{3\pi}{2} \right\}$.
For such values of the parameters, the parametric gates belong to the Clifford group, and since also the nonparametric gates belong to the Clifford group by hypothesis, the whole $U_{x,\theta}$ belongs to the Clifford group.
Therefore, the analytic NTK can be computed via the efficient algorithm for Clifford quantum circuits \cite{Aaronson_2004}.

Let us now state informally our second contribution.
The average of the Gaussian process associated to a trained quantum neural network has a limit $\mu_\infty:\mathcal{X}\to\mathbb{R}$ for infinite training time.
Such a limit depends only on the NTK of the network and on the training data (see \autoref{sub:NTK} and \cite{girardi2024,melchor2025quantitative} for the details).
We can now leverage the algorithm of \autoref{infomral_thm1} to estimate $\mu_\infty$:
\begin{thm}[Informal statement of \autoref{thm:result1}]\label{informal_thm2}
Within the same hypotheses of \autoref{infomral_thm1}, the following hold true:
\begin{enumerate}
\item[(a)]For any $x\in \mathcal{X}$ and any $\epsilon,\,\delta>0$ there is a classical probabilistic algorithm that, with probability at least $1-\delta$, outputs an estimate $\tilde{\mu}_\infty(x)$ of the average $\mu_\infty(x)$ of the model function in $x$ in the limit of infinite training time such that
\begin{equation}
    \left|\tilde{\mu}_\infty(x) - \mu_\infty(x)\right|<\epsilon
\end{equation}
with
\begin{align}\label{eq:mux}
N=\frac{24R^2 + 4R\epsilon}{3\epsilon^2}\log\frac{2(1+d_{{\rm train}})}{\delta} + \frac{2(1+\sqrt{2})^{4}L^4d_{{\rm train}}^3m^8\norma{K_{{\rm train}}^{-1}}_{{\rm op}}^{4}\norma{Y}_{2}^{2}}{3\,\epsilon^2}\log\frac{4d_{{\rm train}}}{\delta}
\end{align}
iid samples of $\theta$, where 
\begin{align}
&R=2L\sqrt{d_{{\rm train}}}m^{2}\norma{K_{{\rm train}}^{-1}Y}_{2},\\
\end{align}
$d_{\mathrm{train}}$ is the number of training examples, $Y$ is the vector of the training labels and $K_{{\rm train}}$ is the NTK restricted to the training inputs.
The algorithm requires
\begin{equation}\label{eq:big-O}
O\left(N\,L^{2}\,d_{{\rm train}}\left(m\,n^2+d_{{\rm train}}^{2}\right)\right)
\end{equation}
elementary operations.
\item[(b)] The output $\tilde{\mu}_\infty$ of the algorithm above satisfies
\begin{equation}
    \left|\tilde{\mu}_\infty(x) - \mu_\infty(x)\right|<\epsilon\qquad\forall\,x\in\mathcal{X}
\end{equation}
(\emph{i.e.}, the estimate of $\mu_\infty$ has a uniformly bounded error on the whole $\mathcal{X}$) with probability at least $1-\delta$ with
\begin{equation}\label{eq:mu}
N=\frac{24R^2 + 4R\epsilon}{3\epsilon^2}\log\frac{2|\mathcal{X}|(1+d_{{\rm train}})}{\delta}+\frac{2(1+\sqrt{2})^{4}L^4\,d_{{\rm train}}^3m^8\norma{K_{{\rm train}}^{-1}}_{{\rm op}}^{4}\norma{Y}_{2}^{2}}{3\,\epsilon^2}\log\frac{4|\mathcal{X}|d_{{\rm train}}}{\delta} 
\end{equation}
iid samples of $\theta$.
\end{enumerate}
\end{thm}
\begin{rem}
\eqref{eq:mu} coincides with \eqref{eq:mux} upon replacing $\delta$ with $\frac{\delta}{|\mathcal{X}|}$.
A number of samples growing logarithmically with the size of the feature space is enough to get an estimate of $\mu_\infty$ with a uniformly bounded error on the whole $\mathcal{X}$.
\end{rem}

\modifica{\begin{rem}
    \autoref{informal_thm2} states that sample and time complexities related to the classical algorithm for estimating $\mu_\infty(x)$ are polynomial. However, we do not believe that our exponents are optimal. Our result needs a time complexity scaling as $d_{\rm train}^6$ (see \eqref{eq:mux} and \eqref{eq:big-O}). For realistic values of $d_{\rm train}$ (e.g., the MNIST dataset of handwritten digits includes about $10^5$ examples), such a scaling would be prohibitive. We are confident that our upper bound to the time complexity is pessimistic and that numerical experiments, which we plan to address in a future work, may reveal a better scaling.
\end{rem}}

Our work conceptually aligns with a recent series of results \cite{angrisani2024,ReardonSmith2024improvedsimulation, PRXQuantum.6.010337, martinez25,cerezo2024doesprovableabsencebarren} demonstrating that a wide class of variational quantum circuits can be simulated efficiently by classical computers.
In particular, Ref. \cite{angrisani2024} shows that, if each layer of the circuit contains random iid one-qubit gates sampled from the Haar distribution applied to all the qubits, then, regardless of how the two-qubit gates are placed, the untrained circuit can be simulated efficiently with high probability.
\modifica{Our results imply that wide quantum neural networks with logarithmic depth trained on supervised-learning problems with the input encoded with Clifford gates can be simulated efficiently by classical computers.}

\subsection{Outline}
The organization of the present work is as follows. In \autoref{sec:preli}, we introduce the structure of the considered parametric quantum circuit, the model function and the NTK.   
In \autoref{sec:results}, \modifica{we present our algorithm to estimate the NTK and $\mu_\infty$ and state our results} about the sample and computational complexity of the algorithm. In \autoref{sec:proofs}, we prove our results. 
Finally, in \autoref{sec:concl}, we present our conclusions and outline some open questions. In \autoref{app:Bernstein}, we present Bernstein's inequality for rectangular matrices, which we will employ to prove concentration bounds for our estimate of the NTK.

\section{Preliminaries}\label{sec:preli}
Let us start by introducing the notation of the present work.
\subsection{Training data}\label{sub:trdata}
Let $\X$ be the feature space, \emph{i.e.}, the set of all the possible inputs, which we assume to be finite, and we let $\mathbb{R}$ be the output space.
\subsection{The Pauli group and the Clifford group}\label{sub:quantum}
Let $\CC^{2}$ be the Hilbert space of a single qubit. In what follows, we denote by $n\in \N$ the number of qubits of the quantum neural network. Hence, the Hilbert space of the system is $\left(\CC^{2}\right)^{\otimes n}$, with dimension $2^{n}$. Let us now briefly recall the definition of the $n$-qubit Pauli group and of the $n$-qubit Clifford group. Let
\begin{align}
{\sigma_0}\coloneqq 
\begin{bmatrix}
    1&0\\
    0&1
\end{bmatrix},
\hskip 0,2cm 
{\sigma_1}\coloneqq 
\begin{bmatrix}
    0&1\\
    1&0
\end{bmatrix},
\hskip 0,2cm 
{\sigma_2}\coloneqq 
\begin{bmatrix}
    0&-i\\
    i&0
\end{bmatrix},
\hskip 0,2cm 
{\sigma_3}\coloneqq 
\begin{bmatrix}
    1&0\\
    0&-1
\end{bmatrix}
\end{align}
be the Pauli matrices.
\begin{defn}[$n$-qubit Pauli group]
The $n$-qubit Pauli group $\mathbf{P}_n$ is made by the unitary operators acting on $\left(\mathbb{C}^2\right)^{\otimes n}$ of the form $\lambda\, \sigma_{i_1}\otimes\cdots \otimes \sigma_{i_n}$ where $\lambda\in \{\pm 1, \pm i\}$, and $i_1,\,\ldots,\,i_n\in \{0,\,\ldots,\,3\}$.
\end{defn}
\begin{defn}
The $n$-qubit Clifford group $\mathbf{C}_n$ is given by all the unitary operators $U$ acting on $\left(\mathbb{C}^2\right)^{\otimes n}$ such that
 \begin{align}
 U\mathbf{P}_n U^\dagger \subseteq \mathbf{P}_n.
 \end{align}
 That is, the Clifford group $\mathbf{C}_n$ is the normalizer of the $n$-qubit Pauli group.
\end{defn}
For details about properties of these groups, see for instance \cite{mastel2023clifford}.

\subsection{Assumptions on the architecture}\label{sub:assumpt}
For any $x\in\mathcal{X}$ and any $\theta\in[0,2\pi)^L$, we consider quantum circuits of the form
\begin{equation}\label{circuit}
U_{x,\theta} = U_x^{(L)}\,e^{-i\frac{\theta_L}{2}P_L}\,\cdots\,U_x^{(1)}\,e^{-i\frac{\theta_1}{2}P_1}\,U_x^{(0)}\,,
\end{equation}
where $U_x^{(0)},\,\ldots,\,U_x^{(L)}\in\mathbf{C}_n$ can arbitrarily depend on the input $x$, and $P_1,\,\ldots,\,P_L\in\mathbf{P}_n$ are self-adjoint. Let us clarify that the requirement of encoding the input via Clifford gates is not a severe restriction. In particular, Clifford gates allow the encoding of bit strings into the corresponding vectors of the computational basis (basis encoding) or, more generally, into any stabilizer state. 

We let the measured observable $O$ be a real linear combination of at most $m$ self-adjoint elements of the $n$-qubit Pauli group:
\begin{equation}\label{ipot1}
     O=\sum_{k=1}^{m}c_{k}\,P_{k},
    \end{equation}
where $P_{k}=P_k^\dag\in\mathbf{P}_n$ and $c_{k}\in [-1,1]$.

The model function is then
\begin{equation}\label{neuralfunct}
f_\theta(x) = \langle0^n|U_{x,\theta}^\dag\,O\,U_{x,\theta}|0^n\rangle\,,
\end{equation}
where $|0^n\rangle = |0\rangle^{\otimes n}$.

\subsection{The neural tangent kernel}\label{sub:NTK}
Given a training set
\begin{equation}
\mathcal{D}=\left\{\left(x^{(i)},y^{(i)}\right)\right\}_{i=1,\dots,d_{{\rm train}}}\subseteq \mathcal{X}\times \R\,,
\end{equation}
we will call $d_{\rm train}=|\mathcal{D}|$ the number of examples and we will represent it in a vectorized form as follows
\begin{align}
X_{{\rm train}}\coloneqq\begin{pmatrix} x^{(1)}\\x^{(2)}\\\vdots\\x^{(d_{{\rm train}})} \end{pmatrix},\qquad 
Y=\begin{pmatrix} y^{(1)}\\y^{(2)}\\\vdots\\y^{(d_{{\rm train}})} \end{pmatrix}.
\end{align}
\modifica{In what follows, we assume the following:
\begin{ass}\label{A3}
The parameter vector $\theta$ is initialized by independently sampling each entry from the uniform distribution on $[0,2\pi)$.
\end{ass}
Here, we note that the parameters are optimized via gradient descent with respect to the quadratic cost function $\L$ as follows:

\begin{align}\label{gradform1}
\frac{\de \theta_{t}}{\de\,t}=-\eta \nabla_{\theta}\L(\theta_{t}) \,,
\end{align}
where $\L$ is given by
\begin{align}
 \L(\theta)\coloneqq \frac{1}{2}\sum_{i
=1}^{d_{{\rm train}}}\left(f_{\theta}(x^{(i)})-y^{(i)}\right)^{2}\,,
\end{align}
and $\eta>0$ is the learning rate.} Given any function $g:\mathcal{X}\to\mathbb{R}$, we will often use the following notation:
\begin{align}
    g(X_{{\rm train}})\coloneqq \begin{pmatrix} g(x^{(1)})\\g(x^{(2)})\\\vdots\\g(x^{(d_{{\rm train}})}) \end{pmatrix},\qquad g(X_{{\rm train}}^T)\coloneqq \begin{pmatrix} g(x^{(1)})&g(x^{(2)})&\cdots& g(x^{(d_{{\rm train}})}) \end{pmatrix}.
\end{align}
Similarly, for any bivariate function $K:\mathcal{X}\times\mathcal{X}\to\mathbb{R}$ we will write $K(X_{{\rm train}},X_{{\rm train}}^T)$ to indicate the $d_{{\rm train}}\times d_{{\rm train}}$ matrix with entries $\left(K(X_{{\rm train}},X_{{\rm train}}^T)\right)_{ij}\coloneqq K(x^{(i)},x^{(j)})$ for $1\leq i,j\leq d_{{\rm train}}$.
We set
\begin{align}\label{fvectmod}
\begin{aligned}
F_{\theta,{\rm train}}\coloneqq
\begin{pmatrix}
f_{\theta}(x^{(1)})\\
f_{\theta}(x^{(2)})\\
\vdots\\
f_{\theta}(x^{(d_{{\rm train}})})
\end{pmatrix} 
\end{aligned}
= f_{\theta}(X_{{\rm train}})\,.
\end{align}

\begin{defn}[Empirical NTK]\label{def:ENTK}
The empirical NTK is given by the inner product between the gradients of the model function evaluated for different inputs:
\begin{align}\label{NTK1}
\hat{K}_{\theta}(x,x')\coloneqq \left(\nabla_{\theta}f_{\theta}(x)\right)^{T}\nabla_{\theta}f_{\theta}(x')\,,\qquad x,\,x'\in\mathcal{X}\,.
\end{align}
\end{defn} 

In what follows, we introduce some useful assumptions to treat the behavior of the NTK. Before, let us recall the definition of the analytic NTK.
\begin{defn}[Analytic NTK]\label{def:analk1}
The analytic NTK is the expectation of the empirical NTK with respect to the probability distribution of the parameters at initialization:
\begin{align}\label{ankd1}
 K(x,x')\coloneqq \E_\theta\,\hat{K}_{\theta}(x,x')\,,\qquad x,\,x'\in\mathcal{X}\,.  
\end{align}
\end{defn}

\begin{ass}\label{A2}
We suppose that the analytic NTK restricted to the training inputs
\begin{equation}
K_{{\rm train}}=K(X_{{\rm train}},X_{{\rm train}}^{T})
\end{equation}
is invertible. We denote with $\lambda_{\max}^{K_{{\rm train}}}$ 
and $\lambda_{\min}^{K_{{\rm train}}}$ its maximum and minimum eigenvalue, respectively.
\end{ass}
From the gradient-flow equation \eqref{gradform1} and the chain rule, one gets that
\modifica{

\begin{align}\label{gradeq1}
 \displaystyle\begin{cases}
  \frac{\de \theta_{t}}{\de t}=-\eta\nabla_{\theta}f_{\theta_{t}}(X_{{\rm train}}^{T})(F_{\theta_{t},{\rm train}}-Y),\\
 \frac{\de}{\de t}f_{\theta_{t}}(x)=-\eta\left(\nabla_{\theta}f_{\theta_{t}}(x)\right)^{T}\nabla_{\theta}f_{\theta_{t}}(X_{{\rm train}}^{T})(F_{\theta_{t},{\rm train}}-Y)\,,
 \end{cases}   
\end{align}
where $\nabla_{\theta}f_{\theta_{t}}(X_{{\rm train}}^{T})$ denotes the gradient of $f_{\theta_{t}}(X_{{\rm train}}^{T})$ with respect to $\theta$. Recall that $^T$ is the transposition operator. Let us notice that \eqref{gradeq1} can be written as
\begin{align}\label{gradeq2}
\displaystyle\begin{cases}
  \frac{\de \theta_{t}}{\de t}=-\eta\nabla_{\theta}f_{\theta_{t}}(X_{{\rm train}}^{T})(F_{\theta_{t},{\rm train}}-Y),\\
 \frac{\de}{\de t}f_{\theta_{t}}(x)=-\eta \hat{K}_{\theta_{t}}(x,X_{{\rm train}}^{T})(F_{\theta_{t},{\rm train}}-Y).
 \end{cases}     
\end{align}
If we assume that the NTK does not change during training, i.e., that $\hat{K}_{\theta_{t}} = \hat{K}_{\theta_{0}}$, then the time evolution of the model function becomes linear:
\begin{align}
\frac{\de }{\de t}f_{\theta_{t}^{\lin}}^{\lin}(x)=-\eta\hat{K}_{\theta_{0}}(x,X_{{\rm train}}^{T})e^{-\eta\hat{K}_{\theta_{0}}t}(F_{\theta_{0},{\rm train}}-Y)
\end{align}
and has the analytic solution 
\begin{align}\label{linevolmod1}
f_{\theta_{t}^{\lin}}^{\lin}(x)= f_{\theta_{0}}(x)-\hat{K}_{\theta_{0}}(x,X_{{\rm train}}^{T})\hat{K}_{\theta_{0}}^{-1}\left(\mathbbm{1}-e^{-\eta\hat{K}_{\theta_{0}}t}\right)(F_{\theta_{0},{\rm train}}-Y).
\end{align}
Ref. \cite{girardi2024} proves that this assumption holds: in the limit of infinite width with the depth growing at most logarithmically with the width, if covariance of $f$ at initialization and the analytic NTK have finite limits $\mathcal{K}_0$ and $K$, respectively, then the empirical NTK stays close to its value at initialization throughout the whole training.
Moreover, the value of the NTK at initialization is with high probability close to its average with respect to the random initialization of the parameters \cite[Theorem 4.7]{girardi2024}. As a consequence, the probability distribution of the trained model function $\{f_{\theta_{t}}(x)\}_{x\in\X}$ converges in distribution to the Gaussian process with mean and covariance given by \cite[Corollary 4.9]{girardi2024}:}
\begin{align}\label{covlimit}
&\begin{aligned}
&\K_{t}(x,x')\coloneqq \K_{0}(x,x')-K(x,X_{{\rm train}}^{T})K_{{\rm train}}^{-1}\left(\mathbbm{1}-e^{-t\eta\,K_{{\rm train}}}\right)\mathcal{K}_{0}(X_{{\rm train}},x')\\
&-K(x',X_{{\rm train}}^{T})K_{{\rm train}}^{-1}\left(\mathbbm{1}-e^{-t\eta\,K_{{\rm train}}}\right)\mathcal{K}_{0}(X_{{\rm train}},x)+\\
&+K(x,X_{{\rm train}}^{T})K_{{\rm train}}^{-1}\left(\mathbbm{1}-e^{-t\eta\,K_{{\rm train}}}\right)\mathcal{K}_{0}(X_{{\rm train}},X_{{\rm train}}^{T})\left(\mathbbm{1}-e^{-t\eta\,K_{{\rm train}}}\right)K_{{\rm train}}^{-1}K(X_{{\rm train}},x'),
\end{aligned}\\\label{newmedia}
&\mu_{t}(x)\coloneqq K(x,X_{{\rm train}}^{T})K_{{\rm train}}^{-1}\left(\mathbbm{1}-e^{-t\eta\,K_{{\rm train}}}\right)Y\,.
\end{align}
We stress that both $\K_{t}(x,x')$ and $\mu_{t}(x)$ have a limit for $t\to\infty$.
In particular, $\mu_{t}(x)$ converges to
\begin{equation}\label{eq:muinfty}
    \mu_{\infty}(x) = K(x,X_{{\rm train}}^{T})K_{{\rm train}}^{-1}\,Y\,.
\end{equation}

Ref. \cite{melchor2025quantitative} provides a quantitative version of the convergence that holds at finite width. The results of Ref. \cite{melchor2025quantitative} hold uniformly with respect to the training time, and therefore prove that the limits of infinite width and infinite training time commute, and therefore the average of the model function trained for infinite time actually converges to \eqref{eq:muinfty} in the limit of infinite width.

\section{Our Results}\label{sec:results}

\subsection{An efficient classical algorithm to estimate the analytic NTK}
In the present work, we propose an efficient classical algorithm to estimate the analytic NTK as well as $\mu_\infty$ to within any specified precision $\epsilon > 0$ with high probability. To that end, we introduce an estimator for the analytic NTK. Let $\mathcal{S} \coloneqq \{\theta^{(1)},\ldots,\theta^{(N)}\}$ be a collection of parameter samples, where each $\theta^{(i)}$ is drawn uniformly from the set $\{0,\, \frac{\pi}{2},\, \pi,\, \frac{3\pi}{2} \}^L$. Although in general the parameters of the circuit that we consider are generic angles in $[0, 2\pi)$, we prove below that due to the structure of the parametric unitary $U_{\theta, x}$, it is sufficient to restrict attention to $\{0,\, \frac{\pi}{2},\, \pi,\, \frac{3\pi}{2} \}$. 

The following lemma will allow us to compute the gradient with respect to the parameters in the definition of the NTK as an exact finite difference:
\begin{lem}[Parameter-shift rule]\label{lem:parrule}
Let $f_\theta(x)$ be the model function \eqref{neuralfunct} of a parametric quantum circuit as in \eqref{circuit}.
Then, for any $x\in\mathcal{X}$, any $\theta\in[0,2\pi)^L$ and any $i=1,\,\ldots,\,L$ we have
\begin{equation}
\partial_{\theta_i}f_\theta(x) = \frac{1}{2}\left(f_{\theta + \frac{\pi}{2} e_i}(x) - f_{\theta - \frac{\pi}{2} e_i}(x)\right)\,,
\end{equation}
where $e_i$ is the $i$-th vector of the canonical basis of $\mathbb{R}^L$.
\end{lem}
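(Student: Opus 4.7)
The plan is to reduce the derivative with respect to a single parameter to a commutator, and then rewrite that commutator as the finite difference of two conjugations by $\pm \pi/2$-shifted exponentials. I will exploit two structural facts guaranteed by the hypotheses in \autoref{sub:assumpt}: (i) the $i$-th parametric gate depends only on $\theta_i$, so the derivative commutes with the product structure; and (ii) every $P_i\in\mathbf{P}_n$ is self-adjoint and unitary, hence $P_i^{2}=\mathbbm{1}$.

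First I would split the circuit at the $i$-th parametric gate, writing $U_{x,\theta}=B_i\,e^{-i\theta_i P_i/2}A_i$, where $A_i=U_x^{(i-1)}e^{-i\theta_{i-1}P_{i-1}/2}\cdots U_x^{(0)}$ collects everything on the right of the gate and $B_i=U_x^{(L)}e^{-i\theta_L P_L/2}\cdots U_x^{(i)}$ everything on the left. Setting $|\psi\rangle\coloneqq A_i|0^n\rangle$ and $\tilde{O}\coloneqq B_i^\dagger O\,B_i$, the model function reads
\begin{equation}
f_\theta(x)=\langle\psi|\,e^{i\theta_i P_i/2}\,\tilde{O}\,e^{-i\theta_i P_i/2}\,|\psi\rangle.
\end{equation}
Since $P_i^2=\mathbbm{1}$, the exponential has the closed form $e^{-i\theta_i P_i/2}=\cos(\theta_i/2)\mathbbm{1}-i\sin(\theta_i/2)P_i$, and differentiation yields $\partial_{\theta_i}e^{-i\theta_i P_i/2}=-\tfrac{i}{2}P_i\,e^{-i\theta_i P_i/2}$, so that
\begin{equation}
\partial_{\theta_i}f_\theta(x)=\tfrac{i}{2}\langle\psi|\,e^{i\theta_i P_i/2}\,[P_i,\tilde{O}]\,e^{-i\theta_i P_i/2}\,|\psi\rangle.
\end{equation}

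Next I would prove the algebraic identity
\begin{equation}
\tfrac{i}{2}[P_i,\tilde{O}]=\tfrac{1}{2}\Bigl(e^{i\pi P_i/4}\,\tilde{O}\,e^{-i\pi P_i/4}-e^{-i\pi P_i/4}\,\tilde{O}\,e^{i\pi P_i/4}\Bigr),
\end{equation}
which follows by expanding each conjugation with the closed-form expression above (using $P_i^{2}=\mathbbm{1}$): the $\cos^{2}$ and $\sin^{2}$ terms cancel in the difference, leaving exactly $i[P_i,\tilde O]$ from the two cross terms. Substituting this into the previous display and absorbing the $\pm\pi/4$ phases into the exponentials $e^{\pm i\theta_i P_i/2}$ converts the derivative into a difference of two copies of $f$ evaluated at $\theta\pm\tfrac{\pi}{2}e_i$, giving the claimed identity.

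I do not expect any genuine obstacle: the argument is a direct consequence of $P_i^{2}=\mathbbm{1}$ and is local to the $i$-th gate, so none of the surrounding Clifford structure or the assumed form \eqref{ipot1} of $O$ enters. The only point worth stating carefully is the verification that every self-adjoint element of $\mathbf{P}_n$ squares to the identity (the overall phase $\lambda$ must be $\pm1$ for self-adjointness, and $(\sigma_{i_1}\otimes\cdots\otimes\sigma_{i_n})^2=\mathbbm{1}$), which is precisely what legitimizes the closed-form exponential used in both steps.
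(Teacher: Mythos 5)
Your proof is correct and complete: the splitting at the $i$-th gate, the reduction to the commutator, and the identity $\tfrac{i}{2}[P_i,\tilde O]=\tfrac12\bigl(e^{i\pi P_i/4}\tilde O e^{-i\pi P_i/4}-e^{-i\pi P_i/4}\tilde O e^{i\pi P_i/4}\bigr)$ (valid since every self-adjoint $P_i\in\mathbf{P}_n$ satisfies $P_i^2=\mathbbm{1}$) all check out. The paper itself does not spell out this computation but defers to the references \cite{girardi2024,banchi2021measuring}, and your argument is exactly the standard derivation found there, so there is nothing to add.
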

\begin{proof}
This is just a matter of computation, and we refer to \cite{girardi2024,banchi2021measuring}. 
\end{proof}

With this, we propose the following empirical estimator for the analytic NTK:
\begin{align}\label{approxestt}
    \tilde{K}(x,x') &= \frac{1}{N}\sum_{j=1}^N\sum_{i=1}^L\partial_{\theta_i}f_{\theta^{(j)}}(x)\,\partial_{\theta_i}f_{\theta^{(j)}}(x')\nonumber\\
    &=\frac{1}{4\,N}\sum_{j=1}^N\sum_{i=1}^L\left(f_{\theta^{(j)} + \frac{\pi}{2} e_i}(x) - f_{\theta^{(j)} - \frac{\pi}{2} e_i}(x)\right)\left(f_{\theta^{(j)} + \frac{\pi}{2} e_i}(x') - f_{\theta^{(j)} - \frac{\pi}{2} e_i}(x')\right)\,,
\end{align}
where we have used the parameter-shift rule to express the derivatives with respect to the parameters as finite differences.
From a statistical perspective, the estimator $\tilde{K}$ possesses several desirable properties that make it a reliable tool for approximating the analytic NTK. First, it is an unbiased estimator, meaning that its expected value coincides exactly with the true kernel (see \autoref{prop:unbiased} below). This ensures that, on average, the estimator does not systematically overestimate or underestimate the quantity of interest \cite{barankin1949locally,halmos1946theory}. Moreover, because the estimator is constructed as an average over $N$ independent and identically distributed random samples, classical concentration inequalities (such as Hoeffding’s or Bernstein’s inequality) can be applied to provide probabilistic guarantees on the deviation of $\tilde{K}$ from $K$ \cite{Tropp_2011,talagrand1995missing}.
We then propose the following classical algorithm to estimate the NTK:
\begin{algorithm}[Classical estimation of $K(x,x')$]\label{alg:1}
~\\
\begin{itemize}
\item Sample $N$ sets of parameters $\mathcal{S} = \left\{\theta^{(1)},\,\ldots,\,\theta^{(N)}\right\}$, where each entry of each $\theta^{(i)}$ is sampled from the uniform distribution on $\left\{0,\,\frac{\pi}{2},\,\pi,\,\frac{3\pi}{2}\right\}$.
\item Compute $f_{\theta\pm\frac{\pi}{2}e_i}(x)$, $f_{\theta\pm\frac{\pi}{2}e_i}(x')$ for any $\theta\in\mathcal{S}$ and any $i=1,\,\ldots,\,L$.
\item Return
\begin{align*}
    \tilde{K}(x,x') &= \frac{1}{4\,N}\sum_{j=1}^N\sum_{i=1}^L\left(f_{\theta^{(j)} + \frac{\pi}{2} e_i}(x) - f_{\theta^{(j)} - \frac{\pi}{2} e_i}(x)\right)\left(f_{\theta^{(j)} + \frac{\pi}{2} e_i}(x') - f_{\theta^{(j)} - \frac{\pi}{2} e_i}(x')\right)\,.
\end{align*}
\end{itemize}
\end{algorithm}

\autoref{alg:1} can be employed to compute a classical estimate of the average of the trained model function:
\begin{algorithm}[Classical estimation of $\mu_\infty(x)$]\label{alg:2}
~\\
\begin{itemize}
\item Sample $N$ sets of parameters $\mathcal{S} = \left\{\theta^{(1)},\,\ldots,\,\theta^{(N)}\right\}$, where each entry of each $\theta^{(i)}$ is sampled from the uniform distribution on $\left\{0,\,\frac{\pi}{2},\,\pi,\,\frac{3\pi}{2}\right\}$.
\item Apply \autoref{alg:1} with the above $\mathcal{S}$ to compute an estimate $\tilde{K}_{\mathrm{train}}$ of the NTK on the training inputs and an estimate $\tilde{K}(x,X_{\mathrm{train}}^T)$ of $K(x,X_{\mathrm{train}}^T)$.
\item Compute the inverse matrix $\tilde{K}^{-1}_{\mathrm{train}}$.
\item Return
\begin{align*}
   \tilde{\mu}_\infty(x) = \tilde{K}(x,X_{\mathrm{train}}^T)\,\tilde{K}^{-1}_{\mathrm{train}}\,Y\,.
\end{align*}
\end{itemize}
\end{algorithm}

\subsection{Sample complexity}
The following \autoref{thm:result0} determines the sample complexity of estimating the NTK via \autoref{alg:1}:
\begin{thm}[Formal statement of \autoref{infomral_thm1}]\label{thm:result0}
Let $\mathcal{X}$ be the feature space, and for each $x\in \mathcal{X}$ let us consider a parametric quantum circuit as defined in \eqref{circuit} with model function as in \eqref{neuralfunct}.
Suppose that \autoref{A3} holds true.
Let us assume that $\mathbb{E}_\theta\,f_\theta(x) = 0$ for any $x\in\mathcal{X}$.
Then for any $x,x'\in \mathcal{X}$ and any $\epsilon,\,\delta>0$, the output $\tilde{K}(x,x')$ of \autoref{alg:1} with
\begin{equation}
    N=\frac{8L^2m^{2}}{3\epsilon^{2}}\log\frac{2}{\delta}
\end{equation}
iid samples of $\theta$ satisfies 
\begin{align}
\left|\tilde{K}(x,x') - K(x,x')\right| < \epsilon
\end{align}
with probability at least $1-\delta$.
The algorithm requires $O(N\,L^{2}\,m\,n^2)$ elementary operations.
\end{thm}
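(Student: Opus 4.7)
The plan is to prove \autoref{thm:result0} by combining three ingredients: an exact reduction from continuous parameter sampling to sampling on the four Clifford-preserving angles $\{0,\pi/2,\pi,3\pi/2\}$ via a Fourier-degree argument, a Bernstein-type concentration bound for the scalar estimator, and the stabilizer-formalism cost of simulating the resulting Clifford circuits.

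First, I would establish that $\tilde K(x,x')$ is an unbiased estimator of $K(x,x')$. The key structural fact is that each parametric gate satisfies $e^{-i\theta_i P_i/2}=\cos(\theta_i/2)\,\mathbbm{1}-i\sin(\theta_i/2)\,P_i$, and since each $\theta_i$ appears only once in $U_{x,\theta}$ and once (with opposite sign) in $U_{x,\theta}^\dagger$, expanding $U_{x,\theta}^\dagger O U_{x,\theta}$ and applying double-angle identities shows that $f_\theta(x)$ is, as a function of each individual $\theta_j$, a trigonometric polynomial of Fourier degree at most one. Consequently the integrand $\sum_{i=1}^L \partial_{\theta_i}f_\theta(x)\,\partial_{\theta_i}f_\theta(x')$ has Fourier degree at most two in each $\theta_j$, and the four-point uniform measure on $\{0,\pi/2,\pi,3\pi/2\}$ annihilates every Fourier mode $e^{ik\theta}$ with $1\le|k|\le 3$, so it reproduces the continuous uniform expectation on every such polynomial. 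Combined with the parameter-shift rule (\autoref{lem:parrule}), which expresses the derivatives as exact finite differences, this yields $\E_{\mathcal S}\bigl[\tilde K(x,x')\bigr]=K(x,x')$.

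Next, I would obtain the concentration bound by applying Bernstein's inequality to the iid scalar variables $Z_j\coloneqq\sum_{i=1}^L \partial_{\theta_i}f_{\theta^{(j)}}(x)\,\partial_{\theta_i}f_{\theta^{(j)}}(x')$, whose average is $\tilde K(x,x')$. From $\|O\|_{\mathrm{op}}\le\sum_k|c_k|\le m$ one gets $|f_\theta(x)|\le m$ and, via parameter-shift, $|\partial_{\theta_i}f_\theta(x)|\le m$, so Cauchy--Schwarz gives the almost-sure range bound $|Z_j|\le Lm^2$. A matching second-moment bound of order $L^2m^2$ (rather than the naive $L^2m^4$ that would follow directly from $|Z_j|^2\le(Lm^2)^2$) is needed to get the claimed $m^2$-dependence in the sample complexity; this can be extracted by exploiting the centering assumption $\E_\theta f_\theta(x)=0$ together with the rectangular matrix Bernstein inequality recalled in \autoref{app:Bernstein}, applied to the rank-one random matrices built from the parameter-shifted model evaluations. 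Plugging both bounds into Bernstein and solving for $N$ yields $N=\frac{8L^2m^2}{3\epsilon^2}\log\frac{2}{\delta}$.

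Finally, for the computational cost I would note that for every $\theta^{(j)}\in\{0,\pi/2,\pi,3\pi/2\}^L$ and every canonical shift $\pm\tfrac{\pi}{2}e_i$, the parametric gate $e^{-i\theta_i P_i/2}$ becomes a Clifford operation, which combined with the Clifford layers $U_x^{(k)}$ places the whole circuit $U_{x,\theta\pm\pi/2\,e_i}$ in $\mathbf C_n$. Each expectation $\langle 0^n|U^\dagger P_k U|0^n\rangle$ can then be computed by Heisenberg-propagating $P_k$ through the $L$ gates via the Aaronson--Gottesman tableau in $O(Ln^2)$ time. Multiplying by the $m$ Pauli terms of $O$, the $2L$ parameter shifts per input, the two inputs $x,x'$, and the $N$ samples gives the claimed cost $O(NL^2 m n^2)$. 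The main obstacle I expect is precisely the tight variance bound on $Z_j$ needed to obtain the $m^2$ rather than $m^4$ dependence: the pure range-based Hoeffding bound would yield $N=O(L^2m^4/\epsilon^2\log(1/\delta))$, and matching the stated Bernstein-type constants requires carefully exploiting the centering hypothesis together with the product/matrix structure of the summand.
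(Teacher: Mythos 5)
Your reduction to the four\nobreakdash-point distribution and your cost analysis are essentially the paper's own arguments: the paper proves the moment\nobreakdash-matching statement as \autoref{prop1} by diagonalizing $\frac{1}{2}\left(P\otimes I - I\otimes P\right)$ and observing that only the Fourier modes $e^{il\theta}$ with $|l|\le 2$ occur, which is your Fourier\nobreakdash-degree argument in spectral form, and the $O(N\,L^2\,m\,n^2)$ operation count is obtained exactly as you describe by Heisenberg\nobreakdash-propagating each of the $m$ Pauli terms through the $O(L)$ Clifford gates at $O(n^2)$ cost per gate (\autoref{cliffordlem}--\autoref{lem:complex3}).

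The gap is in the concentration step. You correctly note that the range bound $|Z_j|\le L m^2$ only yields a variance proxy of order $L^2 m^4$, and you defer the needed $O(L^2m^2)$ second\nobreakdash-moment bound to an unspecified argument ``exploiting the centering assumption $\mathbb{E}_\theta f_\theta(x)=0$.'' That step would fail: the hypothesis centers $f$ itself, whereas $Z_j$ is built from the derivatives $\partial_{\theta_i}f_\theta$, which are mean\nobreakdash-zero over the torus regardless of that hypothesis, so centering buys nothing; and the bound itself is false in general. For instance, with $L=1$, one qubit, $U_\theta=e^{-i\theta\sigma_1/2}$ and $O=\sum_{k=1}^m c_k\sigma_3$ with all $c_k=1$, one has $f_\theta=m\cos\theta$, hence $Z=m^2\sin^2\theta$ and $\mathbb{E}\,Z^2=\tfrac38 m^4$, so the variance of $Z$ is genuinely of order $m^4$ and no $O(L^2m^2)$ moment bound can hold. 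You should therefore carry out the Bernstein argument with $R=Lm^2$ and $\nu\le NR^2$, which gives $N=\frac{8L^2m^4}{3\epsilon^2}\log\frac2\delta$. Be aware that the paper's own proof does exactly this (\autoref{thm-1} uses $R=L\,d_{\rm train}\left\|f\right\|_\infty^2$ and obtains the tail $2\exp\left(-3N\epsilon^2/\left(8L^2\left\|f\right\|_\infty^4\right)\right)$); the $m^2$ in the statement of \autoref{thm:result0} arises only because the exponent of $\left\|f\right\|_\infty$ drops from $4$ to $2$ when the paper solves for $N$, so the paper does not establish the $m^2$ dependence either. Either prove a genuinely sharper moment bound under additional structural assumptions, or state the sample complexity with $m^4$.
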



The following \autoref{thm:result1} determines the sample and computational complexity of estimating $\mu_\infty$ via \autoref{alg:2}:

\begin{thm}[Formal statement of \autoref{informal_thm2}]\label{thm:result1}
Under the same hypotheses of \autoref{thm:result0}, the following hold true.
\begin{enumerate}
\item[(a)] For any $x\in\mathcal{X}$, any
\begin{equation}\label{hyponew}
0<\epsilon< \frac{L}{2}\sqrt{d_{{\rm train}}}m^2\norma{Y}_{2}\norma{K_{{\rm train}}^{-1}}_{{\rm op}}
    \end{equation}
and any $\delta>0$, the output $\tilde{\mu}_\infty(x)$ of \autoref{alg:2} with
\begin{equation}\label{eq:samplesmux}
N=\frac{24R^2 + 4R\epsilon}{3\epsilon^2}\log\frac{2(1+d_{{\rm train}})}{\delta}+\frac{2(1+\sqrt{2})^{4}L^4\,d_{{\rm train}}^3m^8\norma{K_{{\rm train}}^{-1}}_{{\rm op}}^{4}\norma{Y}_{2}^{2}}{3\,\epsilon^2}\log\frac{4d_{{\rm train}}}{\delta}    
\end{equation}
iid samples of $\theta$, where 
\begin{align}
&R=2L\sqrt{d_{{\rm train}}}m^{2}\norma{K_{{\rm train}}^{-1}Y}_{2}\,,
\end{align}
satisfies
\begin{equation}
    \left|\tilde{\mu}_\infty(x) - \mu_\infty(x)\right| < \epsilon
\end{equation}
with probability at least $1-\delta$.
The algorithm requires
\begin{align}
O(NL^{2}\,d_{{\rm train}}\,[m\,n^2+d_{{\rm train}}^{2}])
\end{align}
elementary operations.
\item[(b)] For any $\epsilon$ satisfying \eqref{hyponew} and any $\delta>0$, the output $\tilde{\mu}_\infty$ of \autoref{alg:2} applied to each $x\in\mathcal{X}$ with the same $\mathcal{S}$ with
\begin{equation}
N=\frac{24R^2 + 4R\epsilon}{3\epsilon^2}\log\frac{2|\mathcal{X}|(1+d_{{\rm train}})}{\delta}+\frac{2(1+\sqrt{2})^{4}L^4\,d_{{\rm train}}^3m^8\norma{K_{{\rm train}}^{-1}}_{{\rm op}}^{4}\norma{Y}_{2}^{2}}{3\,\epsilon^2}\log\frac{4|\mathcal{X}|d_{{\rm train}}}{\delta}    
\end{equation}
iid samples of $\theta$ satisfies
\begin{equation}
    \left|\tilde{\mu}_\infty(x) - \mu_\infty(x)\right| < \epsilon \qquad\forall\,x\in\mathcal{X}
\end{equation}
with probability at least $1-\delta$.
\end{enumerate}
\end{thm}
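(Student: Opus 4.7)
\textbf{Proof plan for \autoref{thm:result1}.}
The strategy is to decompose the error $\tilde{\mu}_\infty(x)-\mu_\infty(x)$ into a term that is linear in the sampled quantities and a term that comes from the perturbation of the matrix inverse, and to control the two pieces separately via Bernstein-type concentration inequalities. Write $u(x) = K(x,X_{\mathrm{train}}^T)$ and $\tilde{u}(x) = \tilde{K}(x,X_{\mathrm{train}}^T)$, and use the resolvent identity $\tilde{K}_{\mathrm{train}}^{-1} - K_{\mathrm{train}}^{-1} = -K_{\mathrm{train}}^{-1}(\tilde{K}_{\mathrm{train}} - K_{\mathrm{train}})\tilde{K}_{\mathrm{train}}^{-1}$ to split
\begin{equation*}
\tilde{\mu}_\infty(x)-\mu_\infty(x) = \underbrace{(\tilde{u}(x)-u(x))\,K_{\mathrm{train}}^{-1}Y}_{\text{term }A} \;-\; \underbrace{\tilde{u}(x)\,K_{\mathrm{train}}^{-1}(\tilde{K}_{\mathrm{train}}-K_{\mathrm{train}})\tilde{K}_{\mathrm{train}}^{-1}Y}_{\text{term }B}\,.
\end{equation*}
I would allocate a budget of $\epsilon/2$ to each term and a failure probability of $\delta/2$ to each event, then combine by the union bound.

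For term $A$, I would view $(\tilde{u}(x)-u(x))K_{\mathrm{train}}^{-1}Y$ as the empirical average of the iid scalars
\begin{equation*}
Z_j \;=\; \frac{1}{4}\sum_{i=1}^{L} \bigl(f_{\theta^{(j)}+\frac{\pi}{2}e_i}(x)-f_{\theta^{(j)}-\frac{\pi}{2}e_i}(x)\bigr)\,\bigl(f_{\theta^{(j)}+\frac{\pi}{2}e_i}(X_{\mathrm{train}}^T)-f_{\theta^{(j)}-\frac{\pi}{2}e_i}(X_{\mathrm{train}}^T)\bigr)\,K_{\mathrm{train}}^{-1}Y\,,
\end{equation*}
which have mean $u(x)K_{\mathrm{train}}^{-1}Y$ by \autoref{lem:parrule} and the parameter-averaging argument behind \autoref{thm:result0}. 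Using $|f_\theta|\le m$ and Cauchy--Schwarz on the $d_{\mathrm{train}}$-dimensional inner product gives $|Z_j|\le R/2$, hence the centred summands are bounded by $R$; feeding this bound together with a variance estimate into Bernstein's inequality yields the first summand in \eqref{eq:samplesmux}, with the $\log(2(1+d_{\mathrm{train}})/\delta)$ factor coming from an application of the rectangular matrix Bernstein inequality of \autoref{app:Bernstein} to the $1\times d_{\mathrm{train}}$ vector $\tilde{u}(x)-u(x)$ (so that we retain the stronger vector bound before contracting with $K_{\mathrm{train}}^{-1}Y$).

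For term $B$, I would apply matrix Bernstein to the $d_{\mathrm{train}}\times d_{\mathrm{train}}$ random matrix $\tilde{K}_{\mathrm{train}}-K_{\mathrm{train}}$. Each summand is a sum of $L$ rank-one PSD pieces with operator norm at most $4m^2 d_{\mathrm{train}}$, so the increments are bounded by roughly $Lm^2 d_{\mathrm{train}}$ and the variance proxy scales the same way. Calling on the event $\|\tilde{K}_{\mathrm{train}}-K_{\mathrm{train}}\|_{\mathrm{op}}\le \tau/\|K_{\mathrm{train}}^{-1}\|_{\mathrm{op}}$ for a small absolute constant $\tau$ (to be chosen of the form $1/(1+\sqrt{2})$ so that the von Neumann series gives $\|\tilde{K}_{\mathrm{train}}^{-1}\|_{\mathrm{op}}\le (1+\sqrt{2})\|K_{\mathrm{train}}^{-1}\|_{\mathrm{op}}$), I would chain the inequality
\begin{equation*}
|B|\;\le\;\|\tilde{u}(x)\|_{2}\,\|K_{\mathrm{train}}^{-1}\|_{\mathrm{op}}\,\|\tilde{K}_{\mathrm{train}}-K_{\mathrm{train}}\|_{\mathrm{op}}\,\|\tilde{K}_{\mathrm{train}}^{-1}\|_{\mathrm{op}}\,\|Y\|_{2}
\end{equation*}
and use $\|\tilde{u}(x)\|_{2}\le Lm^{2}\sqrt{d_{\mathrm{train}}}$ to solve for the tolerance on $\|\tilde{K}_{\mathrm{train}}-K_{\mathrm{train}}\|_{\mathrm{op}}$; inverting the matrix Bernstein tail at this tolerance gives the second summand in \eqref{eq:samplesmux}, with the $(1+\sqrt{2})^{4}$ constant tracking the perturbation factors $\|\tilde{K}_{\mathrm{train}}^{-1}\|_{\mathrm{op}}^{2}$ and the $\log(4d_{\mathrm{train}}/\delta)$ coming from the dimension of the matrix Bernstein bound.

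Part (b) then follows by replacing $\delta$ with $\delta/|\mathcal{X}|$ in the first log (the one depending on $x$) and by noticing that the matrix Bernstein bound on $\tilde{K}_{\mathrm{train}}-K_{\mathrm{train}}$ is a single event independent of $x$, so its failure probability also requires a $|\mathcal{X}|$-factor only if the bound on $\|\tilde{u}(x)\|_{2}$ has to be maintained uniformly over $\mathcal{X}$, which motivates the symmetric placement of $|\mathcal{X}|$ in \eqref{eq:mu}. Finally the computational complexity follows by counting $O(Nn^{2})$ operations per Clifford circuit evaluation (Aaronson--Gottesman) times $L$ parameter shifts and $m$ Pauli terms for each of the $d_{\mathrm{train}}+1$ inputs, plus $O(d_{\mathrm{train}}^{3})$ for the matrix inversion. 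The main obstacle I anticipate is the perturbation control of $\tilde{K}_{\mathrm{train}}^{-1}$: one needs the concentration event for $\|\tilde{K}_{\mathrm{train}}-K_{\mathrm{train}}\|_{\mathrm{op}}$ to be tight enough to render the Neumann series convergent with an explicit constant, and this is what forces the $\|K_{\mathrm{train}}^{-1}\|_{\mathrm{op}}^{4}$ and $(1+\sqrt{2})^{4}$ factors and also explains the hypothesis \eqref{hyponew} on $\epsilon$, which simultaneously ensures that all the sampling tolerances are smaller than the radius of invertibility of $\tilde{K}_{\mathrm{train}}$.
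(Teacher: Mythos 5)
Your proposal follows essentially the same route as the paper: the identical decomposition of $\tilde{\mu}_\infty(x)-\mu_\infty(x)$ into $\bigl(\tilde{K}(x,X_{\mathrm{train}}^T)-K(x,X_{\mathrm{train}}^T)\bigr)K_{\mathrm{train}}^{-1}Y$ plus the inverse-perturbation term, the same $1\times d_{\mathrm{train}}$ rectangular Bernstein bound (hence the $1+d_{\mathrm{train}}$ dimensional factor) for the first piece, the same reduction of the second piece to the concentration event for $\|\tilde{K}_{\mathrm{train}}-K_{\mathrm{train}}\|_{\mathrm{op}}$ via \autoref{thm-1}, the same role of \eqref{hyponew}, and the same union bound for part (b). The only deviation is that you control $\|\tilde{K}_{\mathrm{train}}^{-1}\|_{\mathrm{op}}$ through the resolvent identity and a Neumann-series radius, whereas the paper interpolates along $K_t=(1-t)K_{\mathrm{train}}+t\tilde{K}_{\mathrm{train}}$ and integrates a differential inequality, and then solves a quadratic for the exact tolerance; these are interchangeable and lead to the same $(1+\sqrt{2})$-type constants.
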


\begin{rem}
From part (b) of \autoref{thm:result1}, \autoref{alg:2} can estimate $\mu_\infty$ with a uniformly bounded error on the whole $\mathcal{X}$ with a number of samples that grows logarithmically with the size of the feature space.
\end{rem}
\modifica{
\begin{rem}
We stress that the algorithm itself does not require any assumptions on the width or depth of the network. Such assumptions are only needed to guarantee that $\mu_\infty$ is close to the mean of the trained model function.
\end{rem}
}
\section{Proofs}\label{sec:proofs}
In this section, we provide the proofs of the main results.

\subsection{Preliminary results}

We now prove that $\tilde{K}$ is an unbiased estimator.

\begin{prop}\label{prop:unbiased}
Suppose that \autoref{A3} holds true. For any $x,\,x'\in\mathcal{X}$, we have
\begin{align}
 \E_{\theta^{(1)},\ldots,\theta^{(N)}\sim \mathrm{unif}\left\{0,\,\frac{\pi}{2},\,\pi,\,\frac{3\pi}{2}\right\}^{L}}\,\tilde{K}(x,x') = K(x,x')\,.   
\end{align}
\end{prop}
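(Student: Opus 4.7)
The strategy is to prove the claim in two moves: first, to recognize that the estimator $\tilde{K}(x,x')$ is literally a sample average of the empirical NTK evaluated at i.i.d.\ four-valued parameters; second, to show that this discrete average coincides with the continuous average because $\hat{K}_\theta(x,x')$ is a sufficiently low-degree trigonometric polynomial in each entry of $\theta$.

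For the first move I would invoke the parameter-shift rule (\autoref{lem:parrule}) to rewrite each finite difference in the definition of $\tilde K$ as a partial derivative. This immediately gives
\begin{equation*}
\tilde{K}(x,x') = \frac{1}{N}\sum_{j=1}^{N}\sum_{i=1}^{L}\partial_{\theta_i}f_{\theta^{(j)}}(x)\,\partial_{\theta_i}f_{\theta^{(j)}}(x') = \frac{1}{N}\sum_{j=1}^N \hat{K}_{\theta^{(j)}}(x,x'),
\end{equation*}
so by linearity of expectation and the fact that the $\theta^{(j)}$ are i.i.d.,
\begin{equation*}
\E\,\tilde{K}(x,x') = \E_{\theta\sim \mathrm{unif}\{0,\pi/2,\pi,3\pi/2\}^L}\,\hat{K}_\theta(x,x').
\end{equation*}
The proposition is therefore reduced to showing that this discrete four-point average equals the continuous average $\E_{\theta\sim \mathrm{unif}[0,2\pi)^L}\hat{K}_\theta(x,x') = K(x,x')$ guaranteed by \autoref{A3}.

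For the second move I would analyze the dependence of $\hat{K}_\theta(x,x')$ on each coordinate $\theta_k$. Since every $P_k\in\mathbf{P}_n$ is a self-adjoint involution, $P_k^2=\I$ and hence
\begin{equation*}
e^{-i\theta_k P_k/2} = \cos(\theta_k/2)\,\I - i\sin(\theta_k/2)\,P_k,
\end{equation*}
so $U_{x,\theta}$ is linear in the pair $(\cos(\theta_k/2),\sin(\theta_k/2))$ for each $k$. Consequently $f_\theta(x)=\langle 0^n|U_{x,\theta}^\dagger O\,U_{x,\theta}|0^n\rangle$ is a quadratic polynomial in that pair, which by the half-angle identities is a trigonometric polynomial of degree at most $1$ in $\theta_k$. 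Taking a partial derivative preserves this degree, and multiplying two such quantities (as in the definition of $\hat K_\theta$) yields a trigonometric polynomial of degree at most $2$ in each $\theta_k$.

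The final step is then a standard quadrature argument: the uniform average of $\cos(m\theta_k)$ and $\sin(m\theta_k)$ over $\{0,\pi/2,\pi,3\pi/2\}$ vanishes for $m\in\{1,2,3\}$ and equals $1$ or $0$ for $m=0$, which is exactly the behavior of the continuous uniform average on $[0,2\pi)$; hence the two averages agree on any univariate trigonometric polynomial of degree at most $3$. Factorizing the product measure coordinate by coordinate and applying this identity iteratively in $\theta_1,\dots,\theta_L$ transfers the equality from the discrete to the continuous distribution and yields $\E\,\tilde{K}(x,x')=K(x,x')$. The only delicate bookkeeping is step three, checking the per-coordinate degree bound; everything else is a direct computation.
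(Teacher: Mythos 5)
Your proof is correct and follows essentially the same route as the paper: the paper's key step (its Proposition on $\mathbb{E}_{\theta\sim\mu}\left(e^{i\frac{\theta}{2}P}\otimes e^{-i\frac{\theta}{2}P}\right)^{\otimes 2}$ depending on $\mu$ only through $\mathbb{E}\,e^{i\theta}$ and $\mathbb{E}\,e^{2i\theta}$) is exactly the operator-level version of your observation that $\hat{K}_\theta(x,x')$ is a trigonometric polynomial of degree at most $2$ in each coordinate, combined with the fact that the four-point measure matches the continuous uniform moments up to frequency $3$. Your scalar Fourier/quadrature phrasing is a slightly more elementary packaging of the same per-coordinate moment-matching argument, and the bookkeeping you flag (degree $\le 2$ per coordinate) checks out.
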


\begin{proof}
Since each $\theta^{(i)}\in\mathcal{S}$ is independent and identically distributed, then we need to prove that for any $x,\,x'\in\mathcal{X}$ we have
\begin{align}\label{toprove}
\E_{\theta \sim [0,2\pi)^{L}}\left[\nabla_\theta f(\theta,x)^T\nabla_\theta f(\theta,x')\right]=\E_{\theta \sim \left\{0,\,\frac{\pi}{2},\,\pi,\,\frac{3\pi}{2}\right\}^{L}}\left[\nabla_\theta f(\theta,x)^T\nabla_\theta f(\theta,x')\right]\,.
\end{align}
In order to prove \eqref{toprove}, we need the following:
\begin{prop}\label{prop1}
Let $d\in \N$, and $P\in\mathbb{C}^{d\times d}$ be hermitian and such that $P^2 = I$.
Let $\mu$ be a probability distribution on $\mathbb{R}$.
Then, $\mathbb{E}_{\theta\sim\mu}\left(e^{i\frac{\theta}{2}P} \otimes e^{-i\frac{\theta}{2}P}\right)^{\otimes 2}$ depends on $\mu$ only through the quantities $\mathbb{E}_{\theta\sim\mu}\,e^{i\theta},\,\mathbb{E}_{\theta\sim\mu}\,e^{2i\theta}$.
In particular, the uniform distribution on $[0,2\pi)$ and the uniform distribution on $\left\{0,\,\frac{\pi}{2},\,\pi,\,\frac{3\pi}{2}\right\}$ result in the same expectation value of $\left(e^{i\frac{\theta}{2}P} \otimes e^{-i\frac{\theta}{2}P}\right)^{\otimes 2}$. 
\end{prop}

\begin{proof}
Let
\begin{equation}
H = \frac{P\otimes I - I \otimes P}{2}\,.
\end{equation}
We have
\begin{equation}
\left(e^{i\frac{\theta}{2}P} \otimes e^{-i\frac{\theta}{2}P}\right)^{\otimes 2} = e^{i\theta\left(H_1 + H_2\right)}\,,
\end{equation}
where each $H_i$ acts on the $i$-th subsystem.
$P$ has spectrum $\{\pm1\}$, therefore $H$ has spectrum $\{-1,\,0,\,1\}$ and $H_1 + H_2$ has spectrum $\{-2,\,\ldots,\,1,\,2\}$.
For any $l = -2,\,\ldots,\,2$, let $\Pi_l$ be the orthogonal projector onto the eigenspace of $H_1 + H_2$ with eigenvalue $l$.
Then,
\begin{equation}
    \mathbb{E}_{\theta\sim\mu}\left(e^{i\frac{\theta}{2}P} \otimes e^{-i\frac{\theta}{2}P}\right)^{\otimes 2} = \sum_{l=-2}^2 \left(\mathbb{E}_{\theta\sim\mu}\,e^{il\theta}\right)\Pi_l\,.
\end{equation}
The claim follows since for any $l=1,\,2$
\begin{equation}
\mathbb{E}_{\theta\sim\mu}\,e^{-il\theta} = \left(\mathbb{E}_{\theta\sim\mu}\,e^{il\theta}\right)^*\,.
\end{equation}
\end{proof}
Let us notice that due to \autoref{lem:parrule}, we have that

\begin{equation}\label{reducedexp}
\hat{K}_{\theta}(x,x')= \frac{1}{4}\sum_{j=1}^L\left(f_{\theta + \frac{\pi}{2} e_j}(x) - f_{\theta - \frac{\pi}{2} e_j}(x)\right)\left(f_{\theta + \frac{\pi}{2} e_j}(x') - f_{\theta - \frac{\pi}{2} e_j}(x')\right).    
\end{equation}
Denote by $u_{j}=e^{-i\frac{\theta_{j}}{2}P_{j}}$, for $j=1,\ldots, L$. Notice that $\hat{K}_{\theta}(x,x')$ can be written as a polynomial of degree $2$ in the coefficients of $u_{j}$ and degree $2$ in their complex conjugates. Hence, we have that $\hat{K}_{\theta}(x,x')$ can be written as

\begin{align}\label{ultim}
\hat{K}_{\theta}(x,x')={\rm tr}\left(\left(e^{-i\frac{\theta_{j}}{2}P_{j}}\otimes e^{i\frac{\theta_{j}}{2}P_{j}}\right)^{\otimes 2}J_j(\theta,x,x')\right)    
\end{align}
where $J_j(\theta,x,x')$ is a suitable linear operator that does not depend on $\theta_j$. Then by \autoref{prop1}, \eqref{ultim} has the same expected value with respect to the uniform distribution on $[0,2\pi)^L$ and with respect to the uniform distribution on $\left\{0,\,\frac{\pi}{2},\,\pi,\,\frac{3\pi}{2}\right\}^L$, from which we conclude that \eqref{toprove} holds true. Since all the variables in $\mathcal{S}$ are iid, the conclusion of \autoref{prop:unbiased} is proved.
\end{proof}
In what follows, we estimate the deviation of $\tilde{K}_{\mathrm{train}}$ from $K_{\mathrm{train}}$. In the next, let us denote by $\norma{\cdot}_{{\rm op}}$ the operator norm.
\begin{thm}\label{thm-1}
    Let $d_{{\rm train}}$ be the number of training examples.
    We have for any $0<t\le\left\|f\right\|_\infty^2$
    \begin{equation}
        \mathbb{P}\left\{\left\|\tilde{K}_{\mathrm{train}} - K_{\mathrm{train}}\right\|_{{\rm op}} \ge t \right\} \le 2\,d_{{\rm train}}\exp\left(- \frac{3\,N\,t^2}{8\,L^2\,d_{{\rm train}}^2\left\|f\right\|_\infty^4}\right)\,.
    \end{equation}
\end{thm}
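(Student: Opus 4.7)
The plan is to recast $\tilde K_{\mathrm{train}}-K_{\mathrm{train}}$ as an empirical average of iid mean-zero symmetric matrices and then invoke matrix Bernstein in the form recalled in \autoref{app:Bernstein}. Writing
\begin{equation*}
M_j\coloneqq \hat K_{\theta^{(j)}}(X_{\mathrm{train}},X_{\mathrm{train}}^T),\qquad Z_j\coloneqq M_j-K_{\mathrm{train}},
\end{equation*}
one has $\tilde K_{\mathrm{train}}-K_{\mathrm{train}}=\tfrac{1}{N}\sum_{j=1}^N Z_j$, and $\mathbb{E}[Z_j]=0$ by \autoref{prop:unbiased}. Each $M_j$ is a Gram matrix of gradients, hence symmetric and positive semidefinite. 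The task is to produce a uniform bound $R$ on $\|Z_j\|_{\mathrm{op}}$ together with a variance parameter $\nu\ge \|\sum_j\mathbb{E}[Z_j^2]\|_{\mathrm{op}}$, and substitute them into the standard estimate
\begin{equation*}
\mathbb{P}\!\left(\|\textstyle\sum_{j=1}^N Z_j\|_{\mathrm{op}}\ge Nt\right)\le 2d_{\mathrm{train}}\exp\!\left(-\frac{N^2t^2/2}{\nu+RNt/3}\right).
\end{equation*}

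Both inputs follow quickly from the parameter-shift representation. \autoref{lem:parrule} gives $|\partial_{\theta_i}f_\theta(x)|\le \|f\|_\infty$, so \eqref{reducedexp} implies $|\hat K_\theta(x,x')|\le L\|f\|_\infty^2$ pointwise. Since $M_j$ is symmetric and $d_{\mathrm{train}}\times d_{\mathrm{train}}$, its operator norm is bounded by the maximum absolute row sum, yielding $\|M_j\|_{\mathrm{op}}\le d_{\mathrm{train}}L\|f\|_\infty^2$; the same bound transfers to $K_{\mathrm{train}}=\mathbb{E}[M_j]$ by Jensen, so one may take $R\coloneqq 2d_{\mathrm{train}}L\|f\|_\infty^2$. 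For the variance, positivity of $M_j$ gives $\mathbb{E}[Z_j^2]=\mathbb{E}[M_j^2]-K_{\mathrm{train}}^2\preceq \mathbb{E}[M_j^2]$, and $v^T\mathbb{E}[M_j^2]v=\mathbb{E}[v^TM_j^2v]\le\mathbb{E}[\|M_j\|_{\mathrm{op}}^2]$ for every unit vector $v$, so $\|\mathbb{E}[Z_j^2]\|_{\mathrm{op}}\le d_{\mathrm{train}}^2L^2\|f\|_\infty^4$ and I may take $\nu\coloneqq Nd_{\mathrm{train}}^2L^2\|f\|_\infty^4$.

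The remaining step is to dispose of the linear Bernstein correction using the hypothesis $t\le\|f\|_\infty^2$. Provided $d_{\mathrm{train}}L\ge 2$, this hypothesis gives $2t\le d_{\mathrm{train}}L\|f\|_\infty^2$, which rearranges to $RNt/3\le \nu/3$; hence the Bernstein denominator is at most $\tfrac{4}{3}\nu$, so the exponent is at most
\begin{equation*}
-\frac{N^2t^2/2}{4\nu/3}=-\frac{3Nt^2}{8L^2d_{\mathrm{train}}^2\|f\|_\infty^4},
\end{equation*}
which is exactly the claim. I expect the main conceptual point to be the variance bound rather than the tail estimate itself: the blunt inequality $\|\mathbb{E}[Z_j^2]\|_{\mathrm{op}}\le R^2$ would cost a factor of four in $\nu$ and in the exponent, so it is essential to use the positive semidefiniteness of $M_j$ to replace $\|Z_j\|_{\mathrm{op}}^2\le 4d_{\mathrm{train}}^2L^2\|f\|_\infty^4$ by the sharper $d_{\mathrm{train}}^2L^2\|f\|_\infty^4$.
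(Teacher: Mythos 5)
Your proof follows essentially the same route as the paper's: the same decomposition of $\tilde K_{\mathrm{train}}-K_{\mathrm{train}}$ into an average of iid mean-zero symmetric matrices, the same appeal to matrix Bernstein with $d_1=d_2=d_{\mathrm{train}}$, the same per-sample variance proxy $L^2d_{\mathrm{train}}^2\left\|f\right\|_\infty^4$, and the same use of the hypothesis $t\le\left\|f\right\|_\infty^2$ to absorb the linear Bernstein correction into a factor $\tfrac43$. The one substantive difference is the value of $R$. You bound $\left\|Z_j\right\|_{\mathrm{op}}\le\left\|M_j\right\|_{\mathrm{op}}+\left\|K_{\mathrm{train}}\right\|_{\mathrm{op}}\le 2Ld_{\mathrm{train}}\left\|f\right\|_\infty^2$ by the triangle inequality, and as a consequence you must assume $Ld_{\mathrm{train}}\ge2$ in order to get $RNt/3\le\nu/3$; the theorem carries no such hypothesis, so your argument leaves the degenerate case $L=d_{\mathrm{train}}=1$ uncovered (there your denominator is only controlled by $\tfrac53\nu$, giving a slightly weaker exponent). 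The paper avoids this by using the operator-order sandwich $-K_{\mathrm{train}}\preceq Z_j\preceq M_j$ together with $0\preceq M_j\preceq Ld_{\mathrm{train}}\left\|f\right\|_\infty^2\,\mathbbm{1}$ and $0\preceq K_{\mathrm{train}}\preceq Ld_{\mathrm{train}}\left\|f\right\|_\infty^2\,\mathbbm{1}$, which yields $R=Ld_{\mathrm{train}}\left\|f\right\|_\infty^2$ without the factor $2$; then $Rt/3\le Ld_{\mathrm{train}}\left\|f\right\|_\infty^4/3\le\nu/3$ holds for all $L,d_{\mathrm{train}}\ge1$. With that sharper $R$ the blunt variance bound $\nu\le R^2$ already gives $L^2d_{\mathrm{train}}^2\left\|f\right\|_\infty^4$, so your refinement $\mathbb{E}\left[Z_j^2\right]\preceq\mathbb{E}\left[M_j^2\right]$, while correct, is not where the constant is actually saved: the saving comes from bounding $\left\|Z_j\right\|_{\mathrm{op}}$ by a two-sided positive-semidefiniteness argument rather than by the triangle inequality.
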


\begin{proof}
Let us choose in \autoref{cor:Berns}
\begin{equation}
\mathbf{X}_k = \frac{1}{4}\sum_{i=1}^L\left(F_{\theta^{(k)} + \frac{\pi}{2} e_i,\,\mathrm{train}} - F_{\theta^{(k)} - \frac{\pi}{2} e_i,\,\mathrm{train}}\right)\left(F_{\theta^{(k)} + \frac{\pi}{2} e_i,\,\mathrm{train}} - F_{\theta^{(k)} - \frac{\pi}{2} e_i,\,\mathrm{train}}\right)^T - K_{\mathrm{train}}\,.
\end{equation}
We have
\begin{align}
    \mathbf{X}_k &\le \frac{1}{4}\sum_{i=1}^L\left(F_{\theta^{(k)} + \frac{\pi}{2} e_i,\,\mathrm{train}} - F_{\theta^{(k)} - \frac{\pi}{2} e_i,\,\mathrm{train}}\right)\left(F_{\theta^{(k)} + \frac{\pi}{2} e_i,\,\mathrm{train}} - F_{\theta^{(k)} - \frac{\pi}{2} e_i,\,\mathrm{train}}\right)^T\nonumber\\
    &\le \frac{1}{4}\sum_{i=1}^L\left\|F_{\theta^{(k)} + \frac{\pi}{2} e_i,\,\mathrm{train}} - F_{\theta^{(k)} - \frac{\pi}{2} e_i,\,\mathrm{train}}\right\|^2 \le L\,d_{{\rm train}}\left\|f\right\|_\infty^2.
\end{align}
We also have
\begin{equation}
\mathbf{X}_k \ge - K_{\mathrm{train}} \ge -L\,d_{{\rm train}}\left\|f\right\|_\infty^2,
\end{equation}
such that
\begin{equation}
 \left\|\mathbf{X}_k\right\|\le L\,d_{{\rm train}}\left\|f\right\|_\infty^2.
\end{equation}
We get from \autoref{cor:Berns}
\begin{equation}
        \mathbb{P}\left\{\left\|\tilde{K}_{\mathrm{train}} - K_{\mathrm{train}}\right\|_{{\rm op}} \ge t \right\} \le 2\,d_{{\rm train}}\exp\left(- \frac{N\,t^2 / 2}{L^2\,d_{{\rm train}}^2\left\|f\right\|_\infty^4 + L\,d_{{\rm train}}\left\|f\right\|_\infty^2 t / 3}\right)
    \end{equation}
Since $t\le\left\|f\right\|_\infty^2$, we have
\begin{equation}
    L^2\,d_{{\rm train}}^2\left\|f\right\|_\infty^4 + L\,d_{{\rm train}}\left\|f\right\|_\infty^2 t / 3 \le \frac{4}{3}\,L^2\,d_{{\rm train}}^2\left\|f\right\|_\infty^4\,.
\end{equation}
The claim follows.
\end{proof}

\subsection{Proof of \autoref{thm:result0}}
\subsubsection{Estimation of the number of samples}
Let $\epsilon>0$. Notice that by \autoref{thm-1} one gets
\begin{equation}
        \mathbb{P}\left\{\left\vert\tilde{K}(x,x') - K(x,x')\right\vert \ge \epsilon \right\} \le 2\exp\left(- \frac{3\,N\,\epsilon^2}{8\,L^2\,\left\|f\right\|_\infty^4}\right)\,.
    \end{equation}
 Hence, by letting 

 \begin{align}
 2\exp\left(- \frac{3\,N\,\epsilon^2}{8\,L^2\,\left\|f\right\|_\infty^4}\right)\leq \delta,    
 \end{align}
one finds that

\begin{align}
N\geq \frac{8L^2\norma{f}_{\infty}^{2}}{3\epsilon^{2}}\log\left(\frac{2}{\delta}\right),
\end{align}
and where $\norma{f}_{\infty}\leq m$.
\subsubsection{Estimation of the number of elementary operations}\label{complexitycost}
Let us now estimate the required number of operations to compute $\tilde{K}(x,x')$. To this aim, we need the following:

\begin{lem}\label{cliffordlem}
Let $P\in\mathbf{P}_n$ be self-adjoint, and let $Q\in\mathbf{P}_n$.
Then,
\begin{equation}
e^{i\frac{\theta}{2}P}\,Q\,e^{-i\frac{\theta}{2}P} = \left\{\begin{array}{ll}
   Q  & \left[P,\,Q\right]=0 \\
   \cos\theta\,Q + i\sin\theta\,P\,Q  & \left\{P,\,Q\right\} = 0
\end{array}\right.\,.
\end{equation}
In particular, $e^{-i\frac{\theta}{2}P}\in\mathbf{C}_n$ for any $\theta\in\left\{0,\,\frac{\pi}{2},\,\pi,\,\frac{3\pi}{2}\right\}$.
\end{lem}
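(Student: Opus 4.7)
The proof plan rests on two elementary facts about the Pauli group: any self-adjoint $P\in\mathbf{P}_n$ satisfies $P^2=I$ (since $P$ is Hermitian and unitary), and any two elements of $\mathbf{P}_n$ either commute or anticommute. Using $P^2=I$, one can expand the exponential as
\begin{equation}
e^{i\frac{\theta}{2}P}=\cos\tfrac{\theta}{2}\,I + i\sin\tfrac{\theta}{2}\,P,
\end{equation}
which is the workhorse identity I will use throughout.

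First I would handle the two cases of the piecewise formula. If $[P,Q]=0$, then $Q$ commutes with every polynomial in $P$, hence with $e^{\pm i\theta P/2}$, so the conjugation collapses to $Q$. If $\{P,Q\}=0$, then $PQ=-QP$ implies $e^{i\theta P/2}Q = Q\,e^{-i\theta P/2}$ (by moving $P$ past $Q$ term by term in the power series, each move producing a sign flip on the $P$-factor), and therefore
\begin{equation}
e^{i\frac{\theta}{2}P}\,Q\,e^{-i\frac{\theta}{2}P} = Q\,e^{-i\theta P} = Q\left(\cos\theta\,I - i\sin\theta\,P\right) = \cos\theta\,Q + i\sin\theta\,PQ,
\end{equation}
where in the last step I use $-QP = PQ$ once more. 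This proves the displayed formula.

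For the Clifford claim, I would verify that $e^{-i\theta P/2}\,\mathbf{P}_n\,e^{i\theta P/2}\subseteq \mathbf{P}_n$ for $\theta\in\{0,\pi/2,\pi,3\pi/2\}$ by applying the formula above to an arbitrary $Q\in\mathbf{P}_n$. In the commuting case the image is simply $Q\in\mathbf{P}_n$. In the anticommuting case, $\cos\theta$ and $\sin\theta$ take values in $\{0,\pm 1\}$, so one of the two terms vanishes and the other equals $\pm Q$ or $\pm i\,PQ$; since $PQ\in\mathbf{P}_n$ (the Pauli group is closed under products) and the phases $\pm 1,\pm i$ are admissible phases for elements of $\mathbf{P}_n$, the image lies in $\mathbf{P}_n$. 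Hence conjugation by $e^{-i\theta P/2}$ normalizes $\mathbf{P}_n$, which is exactly the defining property of the Clifford group.

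There is no real obstacle here; the only thing to keep clean is the bookkeeping in the anticommuting case, in particular the sign produced when moving $P$ past $Q$ inside the exponential series and the subsequent use of $\{P,Q\}=0$ to convert $QP$ into $-PQ$. Once those signs are tracked, both the formula and the Clifford corollary follow immediately.
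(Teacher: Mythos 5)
Your proposal is correct and follows essentially the same route as the paper: in the anticommuting case you slide $Q$ past one exponential to collapse the conjugation into a single exponential times $Q$ (the paper writes it as $e^{i\theta P}Q$, you as $Qe^{-i\theta P}$, which is the same thing after one more use of $\{P,Q\}=0$), then expand using $P^2=I$, and the Clifford claim follows in both cases from $\cos\theta,\sin\theta\in\{0,\pm1\}$ and closure of $\mathbf{P}_n$ under products. No gaps.
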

\begin{proof}
If $P$ and $Q$ commute we have
\begin{equation}
e^{i\frac{\theta}{2}P}\,Q\,e^{-i\frac{\theta}{2}P} = e^{i\frac{\theta}{2}P}\,e^{-i\frac{\theta}{2}P}\,Q = Q\in\mathbf{P}_n\,.
\end{equation}
If $P$ and $Q$ anticommute, we have
\begin{equation}
e^{i\frac{\theta}{2}P}\,Q\,e^{-i\frac{\theta}{2}P} = e^{i\frac{\theta}{2}P}\,e^{i\frac{\theta}{2}P}\,Q = e^{i\theta P}\,Q = \left(\cos\theta\,I + i\sin\theta\,P\right)Q = \cos\theta\,Q + i\sin\theta\,P\,Q\,.
\end{equation}
For $\theta = 0,\,\frac{\pi}{2},\,\pi,\,\frac{3\pi}{2}$, respectively, we have
\begin{equation}
e^{i\frac{\theta}{2}P}\,Q\,e^{-i\frac{\theta}{2}P} = Q\,,\;i\,P\,Q\,,\;-Q\,,\;- i\,P\,Q\;\in\mathbf{P}_n\,,
\end{equation}
where we have used that $\mathbf{P}_n$ is closed with respect to multiplication.
\end{proof}

\begin{lem}\label{lem:complex1}
For any $\theta\in\left\{0,\,\frac{\pi}{2},\,\pi,\,\frac{3\pi}{2}\right\}^L$ and any $x\in\mathcal{X}$, the complexity of computing $f_{\theta}(x)$ is $O(L\,m\,n^2)$.
\end{lem}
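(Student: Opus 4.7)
The plan is to exploit the Heisenberg-picture expansion
\[
f_\theta(x) = \sum_{k=1}^m c_k\,\langle 0^n|\,U_{x,\theta}^\dagger P_k U_{x,\theta}\,|0^n\rangle
\]
and to evaluate each of the $m$ terms by propagating the Pauli $P_k$ backward through the $2L+1$ layers of $U_{x,\theta}$. By \autoref{cliffordlem} together with the standing hypothesis $U_x^{(\ell)}\in\mathbf{C}_n$, the entire $U_{x,\theta}$ is Clifford whenever $\theta\in\{0,\pi/2,\pi,3\pi/2\}^L$, so $U_{x,\theta}^\dagger P_k U_{x,\theta}$ is again a signed element of $\mathbf{P}_n$, storable in the standard symplectic $\mathbb{F}_2^{2n}$ representation plus a phase in $\{\pm 1,\pm i\}$.

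The two primitives I would analyze are: (i) conjugation of a Pauli through a parametric layer $e^{\pm i\theta_j P_j/2}$, which by \autoref{cliffordlem} reduces to an $O(n)$ commutation check between $P_j$ and the current Pauli followed, in the anticommuting case, by a multiplication by $P_j$ with the appropriate phase update, for a total cost of $O(n)$; and (ii) conjugation through a generic nonparametric Clifford $U_x^{(\ell)}$, which, using its action on the $2n$ generators $X_i,Z_i$, amounts to decomposing the incoming Pauli as a product of $O(n)$ generators, mapping each generator by a table lookup, and multiplying the $O(n)$ resulting Paulis in sequence, costing $O(n^2)$. Summing over the $L$ parametric and $L+1$ nonparametric layers gives $O(Ln^2)$ per Pauli $P_k$.

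Repeating the above for all $m$ Paulis in the decomposition of $O$ and evaluating each $\langle 0^n|\,\cdot\,|0^n\rangle$ in $O(n)$ time (the evolved Pauli contributes $\pm 1$ times its phase only if its $X$-part is trivial, and $0$ otherwise) yields the announced total $O(L\,m\,n^2)$. There is no substantive obstacle: the computation is routine stabilizer bookkeeping in the spirit of the Aaronson--Gottesman algorithm; the only care required is the standard tracking of phases when multiplying two Paulis in the symplectic representation.
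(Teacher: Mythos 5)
Your proposal is correct and follows essentially the same route as the paper: Heisenberg-picture propagation of each of the $m$ Paulis through the $2L+1$ Clifford layers in the symplectic $\mathbb{F}_2^{2n}$ (tableau) representation, using \autoref{cliffordlem} to reduce the parametric gates to Clifford conjugations, at $O(n^2)$ per layer. Your two minor refinements --- the cheaper $O(n)$ commutation-check primitive for the parametric layers and the explicit $O(n)$ evaluation of $\langle 0^n|\cdot|0^n\rangle$ via the triviality of the $X$-part --- are correct but do not change the bound or the substance of the argument.
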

\begin{proof}
In what follows, we need to introduce some further notation. Let us denote by $\mathbb{F}_{2}\coloneqq \{0,1\}$ the field with two elements. Let $a,b\in \mathbb{F}_{2}^{2n}$. Let
\begin{align}
\sigma_{00}=\tau_{00}={\sigma_0},
\hskip 0,2cm 
\sigma_{01}=\tau_{01}={\sigma_1},
\hskip 0,2cm 
\sigma_{10}=\tau_{10}={\sigma_3},
\hskip 0,2cm 
\sigma_{11}={\sigma_2},
\hskip 0,2cm 
\tau_{11}=i\sigma_{11}= 
\begin{bmatrix}
    0&1\\
    -1&0
\end{bmatrix}.
\end{align}
We use the following notation to indicate tensor products of Pauli matrices. For $a=\begin{pmatrix}
v\\
w
\end{pmatrix}$ where $v,w\in \mathbb{F}_{2}^{n}$, we set
\begin{align}\label{tensorproducts}
\begin{aligned}
&\sigma_{a}\coloneqq \sigma_{v_{1}}\sigma_{w_{1}}\otimes\cdots\otimes \sigma_{v_{n}}\sigma_{w_{n}},\\
&\tau_{a}\coloneqq \tau_{v_{1}}\tau_{w_{1}}\otimes\cdots \otimes\tau_{v_{n}}\tau_{w_{n}},
\end{aligned}
\end{align}
where $\sigma_{v_{i}},\sigma_{w_{i}}\in {\bf P}_{1}$.
\begin{rem}
An arbitrary element of $\mathbf{P}_n$ can be represented in a unique way by $i^\delta(-1)^\epsilon \tau_a$ where $\delta,\epsilon\in\mathbb F_2$ and $a\in\mathbb F_2^{2n}$.
In the following, we will employ such a parameterization of the Pauli group.
\end{rem}
Multiplication of two Pauli group elements can now be translated into binary terms in the following way:
\begin{lem}[{\cite[Lemma 1]{Dehaene_2003}}]\label{lem:auxdehaene}
If $a_{1},a_{2}\in \mathbb{F}_{2}^{2n}$, $\delta_{1},\delta_{2},\epsilon_{1},\epsilon_{2}\in \mathbb{F}_{2}$, and $\tau$ is defined as in \eqref{tensorproducts}, then 

\begin{align}
i^{\delta_{1}}(-1)^{\epsilon_{1}}\tau_{a_{1}}i^{\delta_{2}}(-1)^{\epsilon_{2}}\tau_{a_{2}}=i^{\delta_{12}}(-1)^{\epsilon_{12}}\tau_{a_{12}}
\end{align}
where 

\begin{align}
\begin{aligned}
&\delta_{12}=\delta_{1}+\delta_{2},\\
&\epsilon_{12}=\epsilon_{1}+\epsilon_{2}+\delta_{1}\delta_{2}+a_{2}^{T}Ua_{1},\\
&a_{12}=a_{1}+a_{2},\\
&U=
\begin{bmatrix}
0_{n}& I_{n}\\
0_{n}&0_{n}
\end{bmatrix},
\end{aligned} 
\end{align}
where the multiplication, and addition of  binary variables is modulo $2$.
\end{lem}
Let us notice that by definition a Clifford operation $Q$, maps the Pauli group ${\bf P}_{n}$ to itself under conjugation:

\begin{align}\label{hermitiannot}
 Q\tau_{a}Q^{\dagger}=i^{\delta}(-1)^{\epsilon}\tau_{b}   
\end{align}
for some parameters $\delta,\varepsilon,b$ functions of $a$. Since 
\begin{align}
Q\tau_{a_{1}}\tau_{a_{2}}Q^{\dagger}= \left(Q\tau_{a_{1}}Q^{\dagger}\right)\left(Q\tau_{a_{2}}Q^{\dagger}\right),
\end{align}
then we only need to know the image of a generating set of the Pauli group to know the image of all Pauli group elements and define $Q$ up to an overall phase. Hence, it is sufficient to know the image of $\tau_{b_{k}}$, where $\{b_{k}\}_{k=1}^{n}$ form a basis of $\mathbb{F}_{2}^{2n}$. Therefore, it is sufficient to work with Hermitian Pauli group elements only since the image of a Hermitian matrix $H$ under the map $H\mapsto QHQ^{\dagger}$ will be a Hermitian matrix. Hence, in the notation of the right-hand side of \eqref{hermitiannot} Hermitian Pauli group elements are described as 

\begin{align}
    i^{a^{T}Ua}(-1)^{\epsilon}\tau_{a}.
\end{align}
Let us notice that $a^{T}Ua$ modulo $2$ counts the number of $\tau_{11}$ (which is the unique non Hermitian matrix of the four $\tau$ matrices) in the tensor product of $\tau_{a}$. Let us denote by $(e_{k})_{k=1}^{2n}$ the standard basis of $\mathbb{F}_{2}^{2n}$ where $e_{k}$ is the $k$-th column of $I_{2n}$, and consider the hermitian generators of the Pauli group $\left\{\tau_{e_{k}}\right\}_{k=1}^{2n}$. These correspond to single qubit-operations $\sigma_{z}$, and $\sigma_{x}$. So that, we denote their images under $H\mapsto QHQ^{\dagger}$ by $i^{d_{k}}(-1)^{h_{k}}\tau_{c_{k}}$. We then assemble the matrix $C$ with columns $c_{k}$, and the vectors $d$, $h$ with scalars $d_{k}$, $h_{k}$, respectively. Since the images are Hermitian, then 

\begin{align}
d_{k}=c_{k}^{T}Uc_{k} \hskip 0,1cm \text{or}\hskip 0,1cm d={\rm diag}(C^{T}UC)  
\end{align}
where ${\rm diag}(C^{T}UC)$ denotes the vector of diagonal elements of $C^{T}UC$. Hence, given $C,d$, and $h$ defining the Clifford operation $Q$, the image $i^{\delta_{2}}(-1)^{\epsilon_{2}}\tau_{b_{2}}$ of $i^{\delta_{1}}(-1)^{\epsilon_{1}}\tau_{b_{1}}$ under $H\mapsto QHQ^{\dagger}$ can be found by multiplying those operators $i^{d_{k}}(-1)^{h_{k}}\tau_{c_{k}}$ for which $b_{1k}=1$. Then by \autoref{lem:auxdehaene}

\begin{align}
\begin{aligned}\label{represent}
&b_{2}=Cb_{1},\\ 
&\delta_{2}=\delta_{1}+d^{T}b_{1}\\
&\epsilon_{2}=\epsilon_{1}+ h^{T}b_{1}+b_{1}^{T}\left({\rm lows}(C^{T}UC+dd^{T})\right)b_{1}+\delta_{1}d^{T}b_{1},
\end{aligned}
\end{align}
where ${\rm lows}(P)$ denotes the strictly lower triangular part of a matrix $P$. Le us now consider

\begin{align}\label{images}
P_{k}\longmapsto U_{x}^{(\ell)}P_{k}U_{x}^{(\ell)\dagger}\qquad\ell=1,\ldots, L,   
\end{align}
where $P_{k}\in {\bf P}_{n}$, $k=1,\ldots, m$. Hence the images of $P_{k}$ under \eqref{images} can be written as in \eqref{represent}. Furthermore, by the explicit formulas in \eqref{represent}, we have that the number of operations to specify the image is then $O(n^{2})$ (see also \cite[Section III]{Aaronson_2004}). Then for each $\ell=1,\ldots, L$ one gets that by \autoref{cliffordlem}, 

\begin{align}
  e^{i\frac{\theta_\ell}{2}P_\ell}U_{x}^{(\ell)}P_{k}U_{x}^{(\ell)\dagger} e^{-i\frac{\theta_\ell}{2}P_\ell} 
\end{align}
belongs to the Clifford group for any $\theta_{\ell} \in \left\{0,\,\frac{\pi}{2},\,\pi,\,\frac{3\pi}{2}\right\}$, and applying \autoref{lem:auxdehaene}, its images can be computed as in \eqref{represent} by making $O(n^{2})$ elementary operations. Recall that $f_{\theta}(x)$ can be written as
\begin{equation}
f_{\theta}(x)=\sum_{k=1}^{m}c_{k}\bra{0^n}U_{x,\theta}^\dag\,P_{k}\,U_{x,\theta}\ket{0^n}
\end{equation}
where
\begin{align}
    U_{x,\theta} = U_x^{(L)}\,e^{-i\frac{\theta_L}{2}P_L}\,\cdots\,U_x^{(1)}\,e^{-i\frac{\theta_1}{2}P_1}\,U_x^{(0)},
\end{align}
so that $U_{x,\theta}P_{k}U_{x,\theta}^{\dagger}$ gives us an element of the form $i^{\delta_{2}}(-1)^{\epsilon_{2}}\tau_{b_{2}}$ which can be computed as in \eqref{represent} after $2L+1$ compositions. Then applying \autoref{lem:auxdehaene}, one gets that the complexity of computing $f_{\theta}(x)$ is $O(Lmn^2)$.
\end{proof}
\begin{lem}\label{lem:complex2}
For any $\theta\in\left\{0,\,\frac{\pi}{2},\,\pi,\,\frac{3\pi}{2}\right\}^L$, and any $x\in\mathcal{X}$, the complexity of computing $\partial_{\theta_{i}}f_{\theta}(x)$ is $O(Lmn^2)$ for each $i=1,\ldots,L$.
\end{lem}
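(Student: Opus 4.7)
The plan is to invoke the parameter-shift rule together with the previous lemma on the complexity of evaluating $f_\theta(x)$ at Clifford-compatible angles. By \autoref{lem:parrule}, we can rewrite
\begin{equation}
\partial_{\theta_i} f_\theta(x) = \tfrac{1}{2}\bigl(f_{\theta + \frac{\pi}{2} e_i}(x) - f_{\theta - \frac{\pi}{2} e_i}(x)\bigr),
\end{equation}
so the derivative is reduced to two ordinary evaluations of the model function at shifted parameter vectors.

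The key observation is that if $\theta \in \{0,\,\tfrac{\pi}{2},\,\pi,\,\tfrac{3\pi}{2}\}^L$, then $\theta \pm \tfrac{\pi}{2} e_i$ also lies in $\{0,\,\tfrac{\pi}{2},\,\pi,\,\tfrac{3\pi}{2}\}^L$ (modulo $2\pi$), because shifting a single entry by $\pm \tfrac{\pi}{2}$ preserves the discrete set. Therefore, both evaluations $f_{\theta \pm \frac{\pi}{2} e_i}(x)$ fall exactly under the hypothesis of \autoref{lem:complex1}, and each can be computed in $O(L\,m\,n^2)$ elementary operations using the efficient Clifford-circuit simulation based on the stabilizer tableau representation described in the proof of \autoref{lem:complex1}.

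The final step is to combine the two evaluations via a subtraction and a multiplication by $\tfrac{1}{2}$, which contributes only $O(1)$ extra operations. Summing, the total cost is $2 \cdot O(L\,m\,n^2) + O(1) = O(L\,m\,n^2)$ per index $i$, which is the claim. There is no genuine obstacle in this argument: the entire content is that the parameter-shift rule reduces the derivative to two circuit evaluations whose Clifford structure is preserved by the $\pm \tfrac{\pi}{2}$ shift, so that \autoref{lem:complex1} applies directly.
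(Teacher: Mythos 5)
Your proof is correct and follows exactly the paper's argument: apply the parameter-shift rule to reduce $\partial_{\theta_i}f_\theta(x)$ to two evaluations of $f$ at shifted parameters and invoke \autoref{lem:complex1} for each. Your explicit remark that $\theta\pm\frac{\pi}{2}e_i$ stays in $\left\{0,\frac{\pi}{2},\pi,\frac{3\pi}{2}\right\}^L$ modulo $2\pi$ is a small detail the paper leaves implicit, but the route is the same.
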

\begin{proof}
Notice that by \autoref{lem:parrule}, we have that for each $i=1,\ldots, L,$ $\partial_{\theta_{i}}f_{\theta}(x)$ is a linear combination of $f_{\theta+\frac{\pi}{2}e_{i}}(x)$, $f_{\theta-\frac{\pi}{2}e_{i}}(x)$. Then by \autoref{lem:complex1}, the complexity of computing $\partial_{\theta_{i}}f_{\theta}(x)$ is then $O(Lmn^2)$.
\end{proof}

\begin{lem}\label{lem:complex3}
For any $\theta\in\left\{0,\,\frac{\pi}{2},\,\pi,\,\frac{3\pi}{2}\right\}^L$, and any $x,x'\in\mathcal{X}$, the complexity of computing $\hat{K}_{\theta}(x,x')$, and $\tilde{K}(x,x')$ is $O(L^2mn^2)$, and $O(NL^2mn^2)$, respectively.  
\end{lem}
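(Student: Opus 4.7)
The plan is to express $\hat{K}_\theta(x,x')$ and $\tilde K(x,x')$ as linear combinations of quantities whose complexity has already been established in \autoref{lem:complex1} and \autoref{lem:complex2}, and then just count the number of such building blocks.

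First I would recall the definition of the empirical NTK,
\begin{equation}
\hat K_\theta(x,x') \;=\; \sum_{i=1}^{L}\partial_{\theta_i}f_\theta(x)\,\partial_{\theta_i}f_\theta(x'),
\end{equation}
which expresses $\hat K_\theta(x,x')$ as a sum of $L$ products of partial derivatives, each evaluated at the same $\theta$ but at the two inputs $x$ and $x'$. By \autoref{lem:complex2}, for $\theta\in\{0,\pi/2,\pi,3\pi/2\}^L$ and each $i=1,\ldots,L$ the evaluation of $\partial_{\theta_i}f_\theta(x)$ (respectively $\partial_{\theta_i}f_\theta(x')$) requires $O(L\,m\,n^2)$ elementary operations. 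Computing all $2L$ such derivatives therefore costs $O(L\cdot L\,m\,n^2)=O(L^2\,m\,n^2)$, and the remaining $L$ scalar multiplications and $L-1$ scalar additions are absorbed into this bound. This gives the first claim, namely that $\hat K_\theta(x,x')$ can be computed with $O(L^2\,m\,n^2)$ operations.

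Next I would handle $\tilde K(x,x')$, which by \autoref{alg:1} is just the empirical average
\begin{equation}
\tilde K(x,x') \;=\; \frac{1}{N}\sum_{j=1}^{N}\hat K_{\theta^{(j)}}(x,x'),
\end{equation}
of $N$ terms, each of the form $\hat K_{\theta^{(j)}}(x,x')$ with $\theta^{(j)}\in\{0,\pi/2,\pi,3\pi/2\}^L$. Applying the bound just obtained to each of the $N$ samples, the total cost is $N\cdot O(L^2\,m\,n^2)=O(N\,L^2\,m\,n^2)$, where the final division by $N$ and the $N-1$ additions are again lower order. This establishes the second claim.

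There is no real obstacle: the content of the lemma is essentially a bookkeeping exercise on top of \autoref{lem:complex2}. The only subtle point worth stating explicitly in the proof is that the parameter-shift rule (\autoref{lem:parrule}) keeps each shifted argument $\theta\pm \tfrac{\pi}{2}e_i$ inside $\{0,\pi/2,\pi,3\pi/2\}^L$ (modulo $2\pi$), so that \autoref{lem:complex1} applies to every circuit evaluation required and the Clifford-simulability used to obtain the $O(L\,m\,n^2)$ bound on $f_\theta(\cdot)$ is preserved throughout.
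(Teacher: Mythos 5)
Your proposal is correct and follows essentially the same route as the paper: decompose $\hat K_\theta(x,x')$ into $L$ products of partial derivatives, invoke \autoref{lem:complex2} for the $O(L\,m\,n^2)$ cost of each derivative to get $O(L^2\,m\,n^2)$, and multiply by $N$ for $\tilde K(x,x')$. Your added remark that the parameter shifts $\theta\pm\tfrac{\pi}{2}e_i$ stay in $\{0,\tfrac{\pi}{2},\pi,\tfrac{3\pi}{2}\}^L$ modulo $2\pi$ is a point the paper leaves implicit and is worth making explicit.
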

\begin{proof}
Let us notice that to estimate $\hat{K}_{\theta}(x,x')$ requires to perform $L$ products of the form $\partial_{\theta_i}f_{\theta}(x)\,\partial_{\theta_i}f_{\theta}(x')$. Then by \autoref{lem:complex2}, to estimate $\hat{K}(x,x')$, we need $O(L^{2}mn^2)$ operations. Therefore, the complexity of computing $\tilde{K}(x,x')$ is then $O(NL^{2}mn^2)$.
\end{proof}

\subsection{Proof of \autoref{thm:result1}}

We split the proof in two parts. In the first one, we estimate the number of samples $N$.
\subsubsection{Estimation of the number of samples}
Notice that
\begin{align}
\begin{aligned}
\mu_{\infty}(x)-\tilde{\mu}_\infty(x)&= \mu_{\infty}(x)- \tilde{K}(x,X_{{\rm train}}^{T}){K}_{\mathrm{train}}^{-1}Y +\tilde{K}(x,X_{{\rm train}}^{T}){K}_{\mathrm{train}}^{-1}Y-\tilde{\mu}(x)\\
&=\left(K(x,X_{{\rm train}}^{T})-\tilde{K}(x,X_{{\rm train}}^{T})\right)K_{\mathrm{train}}^{-1}Y +\tilde{K}(x,X_{{\rm train}}^{T})\left(K_{\mathrm{train}}^{-1}-\tilde{K}_{\mathrm{train}}^{-1}\right)Y.
\end{aligned}
\end{align}
Then for each $\epsilon>0$ such that

\begin{align}
 \epsilon\leq \vert\mu_{\infty}(x)-\tilde{\mu}(x)\vert\leq & \left\vert\left(K(x,X_{{\rm train}}^{T})-\tilde{K}(x,X_{{\rm train}}^{T})\right)K_{{\rm train}}^{-1}Y\right\vert+\\
 &+\left\vert\tilde{K}(x,X_{{\rm train}}^{T})\left(K_{{\rm train}}^{-1}-\tilde{K}_{{\rm train}}^{-1}\right)Y\right\vert
\end{align}
one gets

\begin{align}
\begin{aligned}
&\mathbb{P}\left(\epsilon\leq \vert\mu_{\infty}(x)-\tilde{\mu}(x)\vert\right)\leq \\
&\phantom{spazio}\mathbb{P}\left(\epsilon/2\leq \left\vert\left(K(x,X_{{\rm train}}^{T})-\tilde{K}(x,X_{{\rm train}}^{T})\right)K_{{\rm train}}^{-1}Y\right\vert\right)+\\
&\phantom{spazio}+\mathbb{P}\left(\epsilon/2\leq \left\vert\tilde{K}(x,X_{{\rm train}}^{T})\left(K_{{\rm train}}^{-1}-\tilde{K}_{{\rm train}}^{-1}\right)Y\right\vert\right).
\end{aligned}
\end{align}

Let us now consider the first term in the right-hand side of the last identity. Let us set

\begin{align}
\begin{aligned}
\widetilde{\mathbf{X}}_{k}^{x,{\rm train}}&\coloneqq \frac{1}{4}\sum_{i=1}^L\left(f_{\theta^{(k)} + \frac{\pi}{2} e_i}(x) - f_{\theta^{(k)} - \frac{\pi}{2} e_i}(x)\right)\left(F_{\theta^{(k)} + \frac{\pi}{2} e_i,\,\mathrm{train}} - F_{\theta^{(k)} - \frac{\pi}{2} e_i,\,\mathrm{train}}\right)^T\\
&\phantom{spaz}- K(x,X_{{\rm train}}^{T}),  
\end{aligned}
\end{align}
and also let us set

\begin{equation}
\mathbf{X}_k^{x,{\rm train}}\coloneqq \widetilde{\mathbf{X}}_{k}^{x,{\rm train}}K_{{\rm train}}^{-1}Y.
\end{equation}
Notice that 
\begin{align}
\left\vert\mathbf{X}_k^{x,{{\rm train}}}\right\vert&\leq \norma{\widetilde{\mathbf{X}}_{k}^{x,{\rm train}}}_{2}\norma{K_{{\rm train}}^{-1}Y}_{2}.
\end{align}
Notice that 
\begin{align}
\begin{aligned}
\norma{\widetilde{\mathbf{X}}_{k}^{x,{\rm train}}}_{{\rm 2}}&\leq \frac{1}{4}\sum_{i=1}^{L}\left\vert f_{\theta^{(k)} + \frac{\pi}{2} e_i}(x) - f_{\theta^{(k)} - \frac{\pi}{2} e_i}(x)\right\vert\norma{F_{\theta^{(k)}
+\frac{\pi}{2} e_i,\,\mathrm{train}} - F_{\theta^{(k)} - \frac{\pi}{2} e_i,\,\mathrm{train}}}_{2}\\ &\phantom{formu}+\norma{ K(x,X_{{\rm train}}^{T})}_{2},  
\end{aligned}
\end{align}
where
\begin{align}
\left\vert f_{\theta^{(k)} + \frac{\pi}{2} e_i}(x) - f_{\theta^{(k)} - \frac{\pi}{2} e_i}(x)\right\vert&\leq \vert f_{\theta^{(k)}_{2} + \frac{\pi}{2} e_i}(x) +\vert f_{\theta^{(k)} - \frac{\pi}{2} e_i}(x)\vert\\
&\leq 2m,
\end{align}
and similarly  
\begin{align}
\norma{F_{\theta^{(k)} + \frac{\pi}{2} e_i,\,\mathrm{train}} - F_{\theta^{(k)} - \frac{\pi}{2} e_i,\,\mathrm{train}}}_{2}\leq 2\sqrt{d_{{\rm train}}}m.
\end{align}
Hence,
\begin{align}
\norma{\widetilde{\mathbf{X}}_{k}^{x,{\rm train}}}_{2}\leq L\sqrt{d_{{\rm train}}}m^2+\norma{ K(x,X_{{\rm train}}^{T})}_{2}. 
\end{align}
Now, by applying the previous reasoning, we have $\norma{ K(x,X_{{\rm train}}^{T})}_{2}\leq L\sqrt{d_{{\rm train}}}m^2$, and thus

\begin{align}
\norma{\widetilde{\mathbf{X}}_{k}^{x,{\rm train}}}_{2}\leq 2L\sqrt{d_{{\rm train}}}m^{2}\eqqcolon \tilde{R},
\end{align}
and so that
\begin{align}
\nu(\widetilde{\mathbf{X}}^{x,{\rm train}})\leq N\tilde{R}^2,   
\end{align}
where 

\begin{align}
 \widetilde{\mathbf{X}}^{x,{\rm train}}\coloneqq \sum_{k=1}^{N}\widetilde{\mathbf{X}}_{k}^{x,{\rm train}},   
\end{align}
and $\nu(\cdot)$ as defined in \eqref{covarit}. Therefore, by \autoref{cor:Berns}
\begin{align}
\begin{aligned}
    &\mathbb{P}\left(\left\|\frac{1}{N}\sum_{k=1}^N \mathbf{X}_k^{x,{\rm train}} \right\|_{2} \ge \epsilon/2 \right) \leq \mathbb{P}\left(\norma{K_{{\rm train}}^{-1}Y}_{2}\left\|\frac{1}{N}\sum_{k=1}^N \widetilde{\mathbf{X}}_k^{x,{\rm train}} \right\|_{2} \ge \epsilon/2\right) \\
    & \phantom{spaz}= \mathbb{P}\left(\left\|\frac{1}{N}\sum_{k=1}^N \widetilde{\mathbf{X}}_k^{x,{\rm train}} \right\|_{2} \ge \norma{K_{{\rm train}}^{-1}Y}_{2}^{-1}\epsilon/2 \right)\\    
   &\phantom{spaz} \le (1+d_{{\rm train}})\cdot\exp\left(- \frac{N\,\epsilon^2 /8}{\norma{K_{{\rm train}}^{-1}Y}_{2}^{2}\tilde{R}^2 + \norma{K_{{\rm train}}^{-1}Y}_{2}\tilde{R}\,\epsilon /6}\right).
\end{aligned}
\end{align}
It remains to bound $\tilde{K}(x,X_{{\rm train}}^{T})\left(K_{{\rm train}}^{-1}-\tilde{K}_{{\rm train}}^{-1}\right)Y$. Let us define for $0\leq t\leq 1$

\begin{align}
    K_{t}\coloneqq (1-t)K_{{\rm train}}+ t\tilde{K}_{{\rm train}}.
\end{align}
Notice that 

\begin{align}
    \frac{{\rm d} K_{t}^{-1}}{{\rm d}t}=K_{t}^{-1}(\tilde{K}_{{\rm train}}-K_{{\rm train}})K_{t}^{-1}.
\end{align}
Then we have that

\begin{align}\label{derivazione}
 \frac{{\rm d}\norma{K_{t}^{-1}}_{{\rm op}}}{{\rm d}t}\leq \norma{\frac{{\rm d} K_{t}^{-1}}{{\rm d}t}}_{{\rm op}}\leq \norma{K_{t}^{-1}}_{{\rm op}}^2\norma{\tilde{K}_{{\rm train}}-K_{{\rm train}}}_{{\rm op}}.
\end{align}
To solve the above differential equation, we let 
\begin{align}
    \phi(t)=\exp\left(-\frac{1}{\norma{K_{{\rm train}}-\tilde{K}_{{\rm train}}}_{{\rm op}}y_{t}}\right), \hskip 0,2cm y_{t}=\norma{K_{t}^{-1}}_{{\rm op}}.
\end{align}
Then we obtain that

\begin{align}
    \frac{{\rm d}\phi(t)}{{\rm d}t}\leq \phi(t),
\end{align}
and so

\begin{align}
\phi(t)\leq \phi(0)\exp(t).    
\end{align}
Therefore one has that

\begin{align}\label{boundt}
y_{t}\leq \frac{y_{0}}{1-t\norma{K_{{\rm train}}-\tilde{K}_{{\rm train}}}_{{\rm op}}y_{0}}, \hskip 0,2cm y_{0}=\norma{K_{{\rm train}}^{-1}}_{{\rm op}}
\end{align}
as soon as $1-t\norma{K_{{\rm train}}-\tilde{K}_{{\rm train}}}_{{\rm op}}y_{0}>0$. In what follows, we are interested to consider $y_{t}$ with $t=1$, and the case where \eqref{boundt} holds true. Notice that by \eqref{derivazione}, for $\epsilon>0$
\begin{align}
\begin{aligned}
\epsilon/2&\leq\left\vert\tilde{K}(x,X_{{\rm train}}^{T})\left(K_{{\rm train}}^{-1}-\tilde{K}_{{\rm train}}^{-1}\right)Y\right\vert\\
&\leq \norma{\tilde{K}(x,X_{{\rm train}}^{T})}_{2}\norma{K_{{\rm train}}^{-1}-\tilde{K}_{{\rm train}}^{-1}}_{{\rm op}}\norma{Y}_{2}\\
&\leq \norma{\tilde{K}(x,X_{{\rm train}}^{T})}_{2}\norma{Y}_{2}\int_{0}^{1}\frac{{\rm d}}{{\rm dt}}\norma{K_{t}^{-1}}_{{\rm op}}{\rm dt}\\
&\leq \norma{\tilde{K}(x,X_{{\rm train}}^{T})}_{2}\norma{Y}_{2}\int_{0}^{1}\norma{\frac{{\rm d}}{{\rm dt}}K_{t}^{-1}}_{{\rm op}}{\rm dt}\\
&\leq \norma{\tilde{K}(x,X_{{\rm train}}^{T})}_{2}\norma{Y}_{2}\int_{0}^{1}\norma{K_{t}^{-1}}_{{\rm op}}^2\norma{\tilde{K}-K}_{{\rm op}}{\rm dt}\\
&\leq  \norma{\tilde{K}(x,X_{{\rm train}}^{T})}_{2}\left(\frac{\norma{K_{{\rm train}}-\tilde{K}_{{\rm train}}}_{\rm op}\norma{K_{{\rm train}}^{-1}}_{{\rm op}}^{2}}{\left(1-\norma{K_{{\rm train}}-\tilde{K}_{{\rm train}}}_{{\rm op}}\norma{K_{{\rm train}}^{-1}}_{{\rm op}}\right)^{2}}\right)\norma{Y}_{2}
\end{aligned}
\end{align}
where in the last inequality we have used \eqref{boundt}. Let us define
\begin{align}\label{term_a}
 &a\coloneqq\norma{K_{{\rm train}}-\tilde{K}_{{\rm train}}}_{{\rm op}}\norma{K_{{\rm train}}^{-1}}_{{\rm op}}.
\end{align}
We have
\begin{align}
\begin{aligned}
&\mathbb{P}\left(\epsilon/2\leq\left\vert\tilde{K}(x,X_{{\rm train}}^{T})\left(K_{{\rm train}}^{-1}-\tilde{K}_{{\rm train}}^{-1}\right)Y\right\vert\right)\leq\\
&\leq \mathbb{P}\left(\epsilon/2\leq\left\vert\tilde{K}(x,X_{{\rm train}}^{T})\left(K_{{\rm train}}^{-1}-\tilde{K}_{{\rm train}}^{-1}\right)Y\right\vert, 1-a\leq 0\right)+ \\
&\phantom{formu}+\mathbb{P}\left(\epsilon/2\leq\left\vert\tilde{K}(x,X_{{\rm train}}^{T})\left(K_{{\rm train}}^{-1}-\tilde{K}_{{\rm train}}^{-1}\right)Y\right\vert, 1-a> 0\right) \\
&\leq \mathbb{P}\left(1-a\leq 0\right)+\\
&+\mathbb{P}\left(\epsilon/2\leq \norma{\tilde{K}(x,X_{{\rm train}}^{T})}_{2}\left(\frac{\norma{K_{{\rm train}}-\tilde{K}_{{\rm train}}}_{\rm op}\norma{K_{{\rm train}}^{-1}}_{{\rm op}}^{2}}{(1-\norma{K_{{\rm train}}-\tilde{K}_{{\rm train}}}_{{\rm op}}\norma{K_{{\rm train}}^{-1}}_{{\rm op}})^{2}}\right)\norma{Y}_{2},1-a>0\right)\\
&=\mathbb{P}\left(1-a\leq 0\right)+ \mathbb{P}\left(1-a>0, \epsilon/2 \leq \frac{b}{(1-a)^2}\right),
\end{aligned}
\end{align}
where $a$ is defined in \eqref{term_a}, and $b$ is defined

\begin{align}
&b\coloneqq \norma{\tilde{K}(x,X_{{\rm train}}^{T})}_{2}\norma{K_{{\rm train}}-\tilde{K}_{{\rm train}}}_{\rm op}\norma{K_{{\rm train}}^{-1}}_{{\rm op}}^{2}\norma{Y}_{2}.
\end{align}
Notice that since $\norma{\tilde{K}(x,X_{{\rm train}}^{T})}_{2}\leq L\sqrt{d_{{\rm train}}}m^2$, then

\begin{align}
b\leq  L\sqrt{d_{{\rm train}}}m^2\norma{K_{{\rm train}}-\tilde{K}_{{\rm train}}}_{\rm op}\norma{K_{{\rm train}}^{-1}}_{{\rm op}}^{2}\norma{Y}_{2}\eqqcolon \tilde{b},
\end{align}
so that
\begin{align}
\begin{aligned}
\mathbb{P}\left(1-a\leq 0\right) &+ \mathbb{P}\left(1-a>0, \epsilon/2 \leq \frac{b}{(1-a)^2}\right)\\
&\leq \mathbb{P}\left(1-a\leq 0\right)+ \mathbb{P}\left(1-a>0, \epsilon/2 \leq \frac{\tilde{b}}{(1-a)^2}\right)\\
&=\mathbb{P}\left(1-a\leq \sqrt{\frac{\tilde{b}}{\epsilon/2}}\right).
\end{aligned}
\end{align}
Notice that 

\begin{align}
\begin{aligned}
 a+ \frac{\tilde{b}^{\frac{1}{2}}}{(\epsilon/2)^{\frac{1}{2}}}&= \norma{K_{{\rm train}}-\tilde{K}_{{\rm train}}}_{{\rm op}}\norma{K_{{\rm train}}^{-1}}_{{\rm op}}+ \\
 &+(\epsilon/2)^{-1/2}\left(L\sqrt{d_{{\rm train}}}m^2\norma{K_{{\rm train}}-\tilde{K}_{{\rm train}}}_{\rm op}\norma{K_{{\rm train}}^{-1}}_{{\rm op}}^{2}\norma{Y}_{2}\right)^{\frac{1}{2}}
\end{aligned}
\end{align}
Let us consider the equation

\begin{align}
 1=a+ \frac{\tilde{b}^{\frac{1}{2}}}{(\epsilon/2)^{\frac{1}{2}}},   
\end{align}
and we solve it for $\norma{K_{{\rm train}}-\tilde{K}_{{\rm train}}}_{{\rm op}}^{\frac{1}{2}}=x$. Then we obtain

\begin{align}
\begin{aligned}
1&=x^{2}\norma{K_{{\rm train}}^{-1}}_{{\rm op}}+x(\epsilon/2)^{-\frac{1}{2}}\sqrt{L\sqrt{d_{{\rm train}}}m^2\norma{K_{{\rm train}}^{-1}}_{{\rm op}}^{2}\norma{Y}_{2}}\\
&\eqqcolon x^2\alpha + x(\epsilon/2)^{-\frac{1}{2}}\beta
\end{aligned}
\end{align}
whose solutions for $x$ are given by

\begin{align}\label{solutions}
x=\frac{-\beta(\epsilon/2)^{-\frac{1}{2}}\pm \sqrt{(\epsilon/2)^{-1}\beta^2+4\alpha}}{2\alpha}. 
\end{align}
Since we have imposed the relation $\norma{K_{{\rm train}}-\tilde{K}_{{\rm train}}}_{{\rm op}}^{\frac{1}{2}}=x$, it implies that $x\geq 0$, and also that the negative solution in \eqref{solutions} is excluded. So that, from here the only solution that we need is 

\begin{align}
\begin{aligned}
x=\frac{-\beta(\epsilon/2)^{-\frac{1}{2}}+ \sqrt{(\epsilon/2)^{-1}\beta^2+4\alpha}}{2\alpha}&=\frac{4\alpha}{2\alpha(\beta(\epsilon/2)^{-\frac{1}{2}}+ \sqrt{(\epsilon/2)^{-1}\beta^2+4\alpha})}\\
&=\frac{2}{\beta(\epsilon/2)^{-\frac{1}{2}}+ \sqrt{(\epsilon/2)^{-1}\beta^2+4\alpha}},
\end{aligned}
\end{align}
 and where we are interested in the case where $\norma{K_{{\rm train}}-\tilde{K}_{{\rm train}}}_{{\rm op}}^{\frac{1}{2}}\geq x$. Now by \autoref{hyponew}, one has that

\begin{align}
4\alpha \leq (\epsilon/2)^{-1}\beta^{2},
\end{align}
and thus

\begin{align}
\norma{K_{{\rm train}}-\tilde{K}_{{\rm train}}}_{{\rm op}}^{\frac{1}{2}}\geq \frac{2}{\beta(\epsilon/2)^{-\frac{1}{2}}(1+\sqrt{2})}
\end{align}
that is,

\begin{align}
\norma{K_{{\rm train}}-\tilde{K}_{{\rm train}}}_{{\rm op}}\geq \frac{4}{\beta^{2}(\epsilon/2)^{-1}(1+\sqrt{2})^{2}}    
\end{align}
Therefore,

\begin{align}
\mathbb{P}\left(1-a\leq \sqrt{\frac{\tilde{b}}{\epsilon/2}}\right)\leq \mathbb{P}\left(\norma{K_{{\rm train}}-\tilde{K}_{{\rm train}}}_{{\rm op}}\geq \frac{2\epsilon}{\beta^{2}(1+\sqrt{2})^{2}}\right)   
\end{align}
where 

\begin{align}
\beta^{2}= L\sqrt{d_{{\rm train}}}m^2\norma{K_{{\rm train}}^{-1}}_{{\rm op}}^{2}\norma{Y}_{2}. 
\end{align}
Therefore, by \autoref{thm-1}

\begin{align}
& \mathbb{P}\left(\epsilon/2\leq\left\vert\tilde{K}(x,X_{{\rm train}}^{T})\left(K_{{\rm train}}^{-1}-\tilde{K}_{{\rm train}}^{-1}\right)Y\right\vert\right)\leq  \\ 
 &\phantom{formu}\leq 2\,d_{{\rm train}}\exp\left(- \frac{6\,N\,\epsilon^2}{4(1+\sqrt{2})^{4}L^4d_{{\rm train}}^3m^8\norma{K_{{\rm train}}^{-1}}_{{\rm op}}^{4}\norma{Y}_{2}^{2}}\right).
\end{align}

Therefore,

\begin{align}
\begin{aligned}
\mathbb{P}\left(\epsilon\leq\left\vert\mu_{\infty}(x)-\tilde{\mu}(x)\right\vert\right)&\leq (1+d_{{\rm train}})\cdot\exp\left(- \frac{N\,\epsilon^2 /8}{R^2 + R\,\epsilon /6}\right) +\\
&+ 2\,d_{{\rm train}}\exp\left(- \frac{6\,N\,\epsilon^2}{4(1+\sqrt{2})^{4}L^4d_{{\rm train}}^3m^8\norma{K_{{\rm train}}^{-1}}_{{\rm op}}^{4}\norma{Y}_{2}^{2}}\right),    
\end{aligned}
\end{align}
where
\begin{align}
&R=2L\sqrt{d_{{\rm train}}}m^{2}\norma{K_{{\rm train}}^{-1}Y}_{2}.
\end{align}
Let us now impose that 

\begin{align}
&(1+d_{{\rm train}})\cdot\exp\left(- \frac{N\,\epsilon^2 /8}{R^2 + R\,\epsilon /6}\right)\leq \frac{\delta}{2},\\
& 2\,d_{{\rm train}}\exp\left(- \frac{6\,N\,\epsilon^2}{4(1+\sqrt{2})^{4}L^4d_{{\rm train}}^3m^8\norma{K_{{\rm train}}^{-1}}_{{\rm op}}^{4}\norma{Y}_{2}^{2}}\right)\leq \frac{\delta}{2}.
\end{align}
By solving for $N$ we obtain the following lower bounds:

\begin{align}
&N\geq \frac{48R^2 + 8R\epsilon}{6\epsilon^2}\log\left(\frac{2(1+d_{{\rm train}})}{\delta}\right),\\
&N\geq \frac{4(1+\sqrt{2})^{4}L^4d_{{\rm train}}^3m^8\norma{K_{{\rm train}}^{-1}}_{{\rm op}}^{4}\norma{Y}_{2}^{2}}{6\,\epsilon^2}\log\left(\frac{4d_{{\rm train}}}{\delta}\right).
\end{align}
Then, we need that 

\begin{align}
\begin{aligned}
N&\geq  \frac{24R^2 + 4R\epsilon}{3\epsilon^2}\log\left(\frac{2(1+d_{{\rm train}})}{\delta}\right)\\
&+\frac{2(1+\sqrt{2})^{4}L^4d_{{\rm train}}^3m^8\norma{K_{{\rm train}}^{-1}}_{{\rm op}}^{4}\norma{Y}_{2}^{2}}{3\,\epsilon^2}\log\left(\frac{4d_{{\rm train}}}{\delta}\right).    
\end{aligned}
\end{align}

In what follows, we determine the number of iid samples of $\theta$ needed to estimate $\mu_{\infty}$ with a uniformly bounded error on the whole $\mathcal{X}$.
That is, we want to determine $N$ such that $\max_{x\in\mathcal{X}}\left|\mu_{\infty}(x)-\tilde{\mu}_\infty(x)\right|<\epsilon$.
Notice that by item (a), for any $\varepsilon>0$, and any $\delta'=\frac{\delta}{\vert \mathcal{X}\vert}$, $0<\delta<1$, we have 
\begin{align}\label{Linfty}
\begin{aligned}
\mathbb{P}\left(\epsilon\leq\norma{\mu_{\infty}-\tilde{\mu}_\infty}_{\infty}\right)&\leq\mathbb{P}\left(\bigcup_{x\in\mathcal{X}}\left\{\epsilon\leq\left\vert\mu_{\infty}(x)-\tilde{\mu}_{\infty}(x)\right\vert\right\}\right) \\
&\leq \sum_{x\in\mathcal{X}}\mathbb{P}\left(\epsilon\leq\left\vert\mu_{\infty}(x)-\tilde{\mu}_{\infty}(x)\right\vert\right)\\
&\leq \vert \mathcal{X}\vert\delta'\\
&=\delta.
\end{aligned}
\end{align}
By the previous arguments, we conclude that $\mu_{\infty}$ can be estimated with a uniformly bounded error on the whole $\mathcal{X}$ with a number of samples equal to \eqref{eq:samplesmux} with $\delta$ replaced by $\frac{\delta}{|\mathcal{X}|}$.

\subsubsection{Estimation of the number of elementary operations}
Let us now to estimate the number of elementary operations required to compute $\mu_{\infty}(x)$. Recall that by \autoref{lem:complex1} and \autoref{lem:complex2}, the number of operations required to calculate $f_{\theta}(x)$ and $\nabla_{\theta}f_{\theta}(x)$ is $O(Lmn^2)$, and $O(L^2mn^2)$ respectively. Furthermore, the complexity of computing $\nabla_{\theta}f_{\theta}(X_{{\rm train}})$ is $O(d_{{\rm train}}L^2mn^2)$. On the other hand, to compute the product $\nabla_{\theta}f_{\theta}(x)\nabla_{\theta}f_{\theta}(X_{{\rm train}}^{T})$ we need to perform $Ld_{{\rm train}}$ products. Hence, to compute $K(x,X_{{\rm train}}^{T})$, we need  $O(NL[d_{{\rm train}}Lmn^2+ d_{{\rm train}}])$ operations. On the other hand, by \cite{strassen1969gaussian}, the complexity of inverting $K_{{\rm train}}$ is $O(d_{{\rm train}}^{3})$. Hence, the complexity of computing $K_{{\rm train}}^{-1}$ is $O(NL[Lmn^2d_{{\rm train}}+d_{{\rm train}}^{2}]+ d_{{\rm train}}^{3})$. Therefore, the complexity of computing $\mu_{\infty}(x)$ is then

\begin{align}
O(NL[Lmn^2d_{{\rm train}}+d_{{\rm train}}+d_{{\rm train}}^{2}]+d_{{\rm train}}^{3} + d_{{\rm train}}).
\end{align}
In conclusion, the algorithm requires
\begin{align}
O(NL^{2}\,d_{{\rm train}}\,[m\,n^2+d_{{\rm train}}^{2}])
\end{align}
elementary operations.

\section{Conclusions}\label{sec:concl}
We have proposed an efficient classical algorithm to estimate the NTK arising from a broad family of parametric quantum circuits constructed using Clifford unitaries that depend arbitrarily on the input interleaved by parametric Pauli rotations. Despite the parametric Pauli rotations belong to the Clifford group only for few special values of the parameters, we demonstrated that the proposed \autoref{alg:1} can always efficiently estimate the NTK of the model with provable accuracy.
Moreover, we have showed that \autoref{alg:1} can be combined with the recently proved equivalence between very wide trained quantum neural networks and Gaussian processes \cite{girardi2024,melchor2025quantitative} to efficiently estimate the mean of the trained model function in the limit of infinite width (\autoref{alg:2}).
Therefore, our results prove that \modifica{wide quantum neural networks with logarithmic depth trained on supervised-learning problems with the input encoded with Clifford gates can be simulated efficiently by classical computers and cannot achieve quantum advantage.}

Our work aligns with a recent series of results \cite{ReardonSmith2024improvedsimulation, PRXQuantum.6.010337, martinez25, cerezo2024doesprovableabsencebarren,angrisani2024} showing that the output of several parametric quantum circuits can be estimated efficiently using classical resources. These findings suggest a need for a more refined understanding of where true quantum advantage lies, and under what architectural constraints an efficient classical simulation is feasible.

Our results open the way to several promising research directions. For instance, it would be interesting to investigate whether similar classical simulability results hold for broader classes of quantum circuits beyond the ansatz considered here.
In particular, it would be interesting to consider the scenarios where the input is not encoded via Clifford gates, or when at initialization the parameters are sampled from a distribution that is different from the uniform one.
Such research directions could shed light on which architectures for quantum neural networks can have hope of quantum advantages.

Finally, the proposed algorithm to efficiently compute the NTK of complex parametric quantum circuits can be viewed as a method to synthesize expressive kernels for classical learning tasks, independently of whether the kernel originates from a physically implemented quantum circuit. 
In this view, a parametric quantum circuit serves as a generative mechanism for feature maps and the associate kernel, computed by our algorithm, can be directly used in a classical kernel method. This opens the possibility of leveraging quantum structural priors without relying on quantum hardware.  


\section*{Acknowledgements}
GDP has been supported by the HPC Italian National Centre for HPC, Big Data and Quantum Computing -- Proposal code CN00000013 -- CUP J33C22001170001 and by the Italian Extended Partnership PE01 -- FAIR Future Artificial Intelligence Research -- Proposal code PE00000013 -- CUP J33C22002830006 under the MUR National Recovery and Resilience Plan funded by the European Union -- NextGenerationEU.
Funded by the European Union -- NextGenerationEU under the National Recovery and Resilience Plan (PNRR) -- Mission 4 Education and research -- Component 2 From research to business -- Investment 1.1 Notice Prin 2022 -- DD N. 104 del 2/2/2022, from title ``understanding the LEarning process of QUantum Neural networks (LeQun)'', proposal code 2022WHZ5XH -- CUP J53D23003890006.
DP has been supported by project SERICS (PE00000014) under the MUR National Recovery and Resilience Plan funded by the European Union -- NextGenerationEU.
GDP and DP are members of the ``Gruppo Nazionale per la Fisica Matematica (GNFM)'' of the ``Istituto Nazionale di Alta Matematica ``Francesco Severi'' (INdAM)''. AMH has been supported by project PRIN 2022 
``understanding the LEarning process of QUantum Neural networks (LeQun)'', proposal code 2022WHZ5XH -- CUP J53D23003890006. The author AMH is a member of the ``Gruppo Nazionale per l'Analisi Matematica, la Probabilità e le loro Applicazioni (GNAMPA)'' of the ``Istituto Nazionale di Alta Matematica ``Francesco Severi'' (INdAM)''.

\section*{Declarations}
\noindent
{\bf Conflict of interest} The authors confirm that there is no Conflict of interest.

\appendix

\section{Bernstein's upper bound for random matrices}\label{app:Bernstein}
In what follows, we aim to recall the so-called Bernstein's inequality. In the next, we denote by $\norma{\cdot}_{{\rm op}}$ the operator norm.
%
\begin{thm}[Matrix Bernstein {\cite[Theorem 6.1.1]{tropp2015}}]\label{thm:bernstein}
Consider a finite sequence $\left\{\mathbf{X}_k\right\}$ of independent, random matrices with dimension $d_{1}\times d_{2}$.
Assume that each random matrix satisfies
\begin{equation}
    \mathbb{E}\,\mathbf{X}_k = \mathbf{0} \quad \text{and} \quad \norma{\mathbf{X}_k}_{{\rm op}} \le R \quad \text{almost surely.}
\end{equation}
Introduce the matrix

\begin{equation}
\mathbf{X}\coloneqq \sum_k \mathbf{X}_k.  
\end{equation}
Let $\nu(\mathbf{X})$ be the matrix variance statistic of the sum:

\begin{equation}\label{covarit}
\nu(\mathbf{X})\coloneqq \max\left\{\norma{\sum_{k}\mathbb{E}(\mathbf{X}_{k}\mathbf{X}_{k}^{T})}_{{\rm op}},\norma{\sum_{k}\mathbb{E}(\mathbf{X}_{k}^{T}\mathbf{X}_{k})}_{{\rm op}} \right\}. \end{equation}

Then, for all $t\ge0$,
\begin{equation}
    \mathbb{P}\left\{ \norma{\mathbf{X}}_{{\rm op}}\geq t \right\} \le (d_{1}+d_{2})\cdot\exp\left(- \frac{t^2 / 2}{\nu(\mathbf{X}) + R\,t / 3}\right).
\end{equation}
\end{thm}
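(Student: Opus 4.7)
The plan is to reduce the rectangular case to the Hermitian one via the self-adjoint dilation trick, and then apply the matrix Laplace transform method combined with Lieb's concavity inequality. For each $d_{1}\times d_{2}$ summand $\mathbf{X}_{k}$, I would introduce its Hermitian dilation
\begin{equation}
\mathcal{H}(\mathbf{X}_{k}) \coloneqq \begin{pmatrix} 0 & \mathbf{X}_{k} \\ \mathbf{X}_{k}^{T} & 0 \end{pmatrix} \in \CC^{(d_{1}+d_{2})\times(d_{1}+d_{2})},
\end{equation}
which is a centered Hermitian random matrix with $\norma{\mathcal{H}(\mathbf{X}_{k})}_{{\rm op}}=\norma{\mathbf{X}_{k}}_{{\rm op}}\le R$. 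Using the identities $\norma{\mathbf{X}}_{{\rm op}}=\lambda_{\max}(\mathcal{H}(\mathbf{X}))$ and, from the block-diagonal structure of $\mathcal{H}(\mathbf{X})^{2}$ together with the vanishing of cross terms in $\mathbb{E}[\mathbf{X}\mathbf{X}^{T}]$ and $\mathbb{E}[\mathbf{X}^{T}\mathbf{X}]$ by independence and mean zero, $\norma{\mathbb{E}\,\mathcal{H}(\mathbf{X})^{2}}_{{\rm op}}=\nu(\mathbf{X})$, so it suffices to prove a one-sided largest-eigenvalue tail bound for the Hermitian sum $\mathbf{Y}\coloneqq\sum_{k}\mathcal{H}(\mathbf{X}_{k})$ on a space of dimension $d_{1}+d_{2}$.

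For this Hermitian problem I would use the matrix Laplace transform: for any $\theta>0$,
\begin{equation}
\mathbb{P}\{\lambda_{\max}(\mathbf{Y})\ge t\}\le e^{-\theta t}\,\mathbb{E}\,\tr\exp(\theta\mathbf{Y}).
\end{equation}
The central non-commutative step is Lieb's concavity theorem, which states that $A\mapsto\tr\exp(H+\log A)$ is concave on the positive definite cone. Iterating Lieb's inequality over the independent summands (thereby circumventing the failure of $e^{A+B}=e^{A}e^{B}$ in the matrix setting) yields the master bound
\begin{equation}
\mathbb{E}\,\tr\exp\Bigl(\theta\sum_{k}\mathcal{H}(\mathbf{X}_{k})\Bigr)\le \tr\exp\Bigl(\sum_{k}\log\mathbb{E}\,e^{\theta\mathcal{H}(\mathbf{X}_{k})}\Bigr).
\end{equation}

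To bound each summand I would use a Bernstein-type scalar comparison, valid when $\norma{\mathcal{H}(\mathbf{X}_{k})}_{{\rm op}}\le R$ and $\mathbb{E}\,\mathcal{H}(\mathbf{X}_{k})=0$: the inequality $e^{\theta x}\le 1+\theta x+\frac{\theta^{2}x^{2}/2}{1-R\theta/3}$ on $|x|\le R$ lifts via the spectral theorem and operator monotonicity of the logarithm to
\begin{equation}
\log\mathbb{E}\,e^{\theta\mathcal{H}(\mathbf{X}_{k})}\preceq\frac{\theta^{2}/2}{1-R\theta/3}\,\mathbb{E}\,\mathcal{H}(\mathbf{X}_{k})^{2}\qquad (0<\theta<3/R).
\end{equation}
Plugging this into the master bound, using $\tr\exp(A)\le (d_{1}+d_{2})\exp(\lambda_{\max}(A))$, and optimizing the Chernoff parameter with the choice $\theta=t/(\nu(\mathbf{X})+Rt/3)$ produces the stated tail.

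The main obstacle is the Lieb-concavity step, which is the only genuinely non-commutative ingredient in the argument; once it is available, everything else faithfully mirrors the classical scalar Bernstein--Chernoff optimization, with the harmless prefactor $d_{1}+d_{2}$ arising from the trace-exponential reduction in the final step and from the dimension of the dilated space.
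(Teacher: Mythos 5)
The paper does not prove this statement at all: it is imported verbatim as \cite[Theorem 6.1.1]{tropp2015}, so there is no in-paper argument to compare against. Your outline correctly reproduces the standard proof from that reference — Hermitian dilation to reduce to a largest-eigenvalue bound on a $(d_1+d_2)$-dimensional space, the matrix Laplace transform with Lieb-concavity subadditivity of the cumulant generating function, the Bernstein moment-generating-function bound $\log\mathbb{E}\,e^{\theta\mathcal{H}(\mathbf{X}_k)}\preceq\frac{\theta^2/2}{1-R\theta/3}\,\mathbb{E}\,\mathcal{H}(\mathbf{X}_k)^2$, and the Chernoff choice $\theta=t/(\nu(\mathbf{X})+Rt/3)$ — and the identification $\norma{\mathbb{E}\,\mathcal{H}(\mathbf{X})^2}_{\rm op}=\nu(\mathbf{X})$ is handled correctly. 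As a sketch it is sound; a fully self-contained proof would still need to supply Lieb's theorem and verify the scalar inequality, but nothing in the chain of reductions is wrong.
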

\begin{cor}\label{cor:Berns}
Consider a finite sequence $\left\{\mathbf{X}_1,\,\ldots,\,\mathbf{X}_N\right\}$ of independent and identically distributed random matrices with dimension $d_{1}\times d_{2}$.
Assume that each random matrix satisfies
\begin{equation}
    \mathbb{E}\,\mathbf{X}_k = \mathbf{0} \quad \text{and} \quad \left\|\mathbf{X}_k\right\|_{{\rm op}} \le R \quad \text{almost surely.}
\end{equation}
Then, for all $t\ge0$,
\begin{equation}
    \mathbb{P}\left\{\left\|\frac{1}{N}\sum_{k=1}^N \mathbf{X}_k \right\|_{{\rm op}} \ge t \right\} \le (d_{1}+d_{2})\cdot\exp\left(- \frac{N\,t^2 / 2}{\nu(\mathbf{X}) + R\,t / 3}\right).
\end{equation}
\end{cor}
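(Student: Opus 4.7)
The plan is to deduce the corollary directly from \autoref{thm:bernstein} by applying it to the unnormalized sum and then performing a one-line rescaling. Set $\mathbf{Y}\coloneqq\sum_{k=1}^{N}\mathbf{X}_{k}$ and observe that the event $\bigl\{\bigl\|\tfrac{1}{N}\sum_{k=1}^{N}\mathbf{X}_{k}\bigr\|_{{\rm op}}\geq t\bigr\}$ coincides with $\bigl\{\|\mathbf{Y}\|_{{\rm op}}\geq Nt\bigr\}$, so it suffices to control the latter.

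Next I would verify that the sequence $\{\mathbf{X}_{k}\}_{k=1}^{N}$ satisfies the hypotheses of \autoref{thm:bernstein}: the matrices are independent, each has mean zero, and each is almost surely bounded in operator norm by $R$. To compute the matrix variance statistic of $\mathbf{Y}$, I would use the iid assumption: since every $\mathbf{X}_{k}$ has the same law as $\mathbf{X}_{1}$,
\begin{equation}
\sum_{k=1}^{N}\mathbb{E}\bigl(\mathbf{X}_{k}\mathbf{X}_{k}^{T}\bigr)=N\,\mathbb{E}\bigl(\mathbf{X}_{1}\mathbf{X}_{1}^{T}\bigr),\qquad \sum_{k=1}^{N}\mathbb{E}\bigl(\mathbf{X}_{k}^{T}\mathbf{X}_{k}\bigr)=N\,\mathbb{E}\bigl(\mathbf{X}_{1}^{T}\mathbf{X}_{1}\bigr),
\end{equation}
so that $\nu(\mathbf{Y})=N\,\nu(\mathbf{X}_{1})$, where $\nu(\mathbf{X}_{1})$ is the per-sample variance statistic; this is exactly the quantity that \eqref{covarit} in the corollary denotes by $\nu(\mathbf{X})$.

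Finally, I would invoke \autoref{thm:bernstein} with threshold $Nt$ in place of $t$:
\begin{equation}
\mathbb{P}\bigl\{\|\mathbf{Y}\|_{{\rm op}}\geq Nt\bigr\}\leq(d_{1}+d_{2})\exp\!\left(-\frac{(Nt)^{2}/2}{\nu(\mathbf{Y})+R\,(Nt)/3}\right),
\end{equation}
and substitute $\nu(\mathbf{Y})=N\,\nu(\mathbf{X})$, canceling one power of $N$ from numerator and denominator to obtain the stated bound. The main obstacle here is essentially clerical: one must correctly interpret the notation $\nu(\mathbf{X})$ in the corollary as the common single-matrix variance statistic, and perform the algebraic rescaling carefully so the factor $N$ in the numerator survives while only one $N$ is cancelled in the denominator. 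No new probabilistic ideas are required beyond those encoded in the matrix Bernstein inequality itself.
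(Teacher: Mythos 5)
Your proposal is correct and is exactly the standard rescaling derivation that the paper intends (the paper in fact states \autoref{cor:Berns} without writing out a proof): apply \autoref{thm:bernstein} to $\sum_{k=1}^N\mathbf{X}_k$ at threshold $Nt$, use the iid assumption to write the variance statistic of the sum as $N$ times the per-sample one, and cancel one factor of $N$. You also correctly identify the one genuine subtlety, namely that $\nu(\mathbf{X})$ in the corollary must be read as the single-matrix variance statistic $\nu(\mathbf{X}_1)$ rather than that of the full sum for the stated bound to come out right, which is consistent with how the corollary is invoked later in the paper.
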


\bibliographystyle{plainnat}
\bibliography{biblio}

\end{document}